\definecolor{purple}{rgb}{.9,0,.9}
\let\orgdescriptionlabel\descriptionlabel
\renewcommand*{\descriptionlabel}[1]{%
  \let\orglabel\label
  \let\label\@gobble
  \phantomsection
  \edef\@currentlabel{#1}%
  \let\label\orglabel
  \orgdescriptionlabel{#1}%
}
\newcommand{\Nat}{\mathbb{N}}
\newcommand{\Real}{\mathbb{R}}
\newcommand{\rfa}{\qquad {\rm for \ all}\ }
\newcommand{\rfae}{\qquad {\rm for \ almost \ every}\ }
\newcommand{\cE}{{\cal E}}
\newcommand{\cG}{{\cal G}}\newcommand{\cH}{{\cal H}}
\newcommand{\cV}{{\cal V}}\newcommand{\cW}{{\cal W}}
\newcommand{\cY}{{Y}}
\newcommand{\ba}{{\bf a}}\newcommand{\bb}{{\bf b}}
\newcommand{\be}{{\bf e}}
\newcommand{\bn}{{\bf n}}
\newcommand{\bt}{{\bf t}}\newcommand{\bu}{{\bf u}}
\newcommand{\bv}{{\bf v}}\newcommand{\bw}{{\bf w}}
\newcommand{\bA}{{\bf A}}
\newcommand{\bB}{{\bf B}}\newcommand{\bD}{{\bf D}}
\newcommand{\bE}{{\bf E}}\newcommand{\bG}{{\bf G}}
\newcommand{\bH}{{\bf H}}
\newcommand{\bM}{{\bf M}}
\newcommand{\bQ}{{\bf Q}}\newcommand{\bR}{{\bf R}}
\newcommand{\Lin}{\mathop{\rm Sym}}
\newcommand{\Sym}{\mathop{\rm Sym}}
\newcommand{\bbC}{\mathbb{C}}
\newcommand{\bpsi}{\boldsymbol{\psi}}
\newcommand{\ve}{\varepsilon}
\newtheorem{theorem}{Theorem}[section]
\newtheorem{lemma}[theorem]{Lemma}
\newtheorem{definition}[theorem]{Definition}
\newtheorem{proposition}[theorem]{Proposition}
\def\Xint#1{\mathchoice
{\XXint\displaystyle\textstyle{#1}}%
{\XXint\textstyle\scriptstyle{#1}}%
{\XXint\scriptstyle\scriptscriptstyle{#1}}%
{\XXint\scriptscriptstyle\scriptscriptstyle{#1}}%
\!\int}
\def\XXint#1#2#3{{\setbox0=\hbox{$#1{#2#3}{\int}$ }
\vcenter{\hbox{$#2#3$ }}\kern-.6\wd0}}
\def\dashint{\Xint-}
\newcommand{\beqn}{\begin{equation}}
\newcommand{\eeqn}{\end{equation}}
\title{Periodic homogenization and material symmetry\\ in linear elasticity}
\author{Mariya Ptashnyk and Brian Seguin}
\begin{document}
\date{}

\maketitle

\begin{center}
\emph{Dedicated to Walter Noll on the occasion of his 90$^{\,th}$ birthday.}
\end{center}

\begin{abstract}
\noindent Here homogenization theory is used to establish a connection between the symmetries of a periodic elastic structure associated with the microscopic properties of an elastic material and the material symmetries of the effective, macroscopic elasticity tensor.  Previous results of this type exist but here more general symmetries on the microscale are considered.  Using an explicit example, we show that it is possible for a material to be fully anisotropic on the microscale and yet the symmetry group on the macroscale can contain elements other than plus or minus the identity.  Another example demonstrates that not all material symmetries of the macroscopic elastic tensor are generated by symmetries of the periodic elastic structure.\\

\noindent \textbf{Keywords:} microstructure, macroscopic elasticity tensor, material symmetry, multiscale analysis\\

\noindent \textbf{MSC subject classification:} 74B05, 74Q15
\end{abstract}


\section{Introduction}

Elastic composites span a wide range of materials, from geometerials and biological tissues to synthetic materials.  See, for example, Cherkaev and Kohn \cite{CK} and Jones \cite{Jones}.  Knowing the properties of such materials can lead to a better understanding of their behavior in their physical environment.  The properties of a composite are dictated by the properties of its constituents.  However, determining the properties of a composite from those of its constituents is a nontrivial matter.  Homogenization theory can be used to rigorously derive the macroscopic properties of a composite from the microscopic properties of its building blocks.  Here we are interested in the case where the composite consists of a periodic arrangement of linearly elastic materials.  In this instance, homogenization theory provides techniques for taking the limit as the size of the periodic microstructure goes to zero and obtain an macroscopic elasticity tensor associated with the composite.  Determining the macroscopic or effective elasticity tensor involves solving systems of elliptic partial differential equations on a representative unit cell domain associated with the periodic microstructure of the composite material.  See, for example, Cioranescu and Donato \cite{CD}, Jikov, Kozlov, and Oleinik \cite{JKO}, and Oleinik, Shamaev, and Yosifian \cite{OSY} for results on the homogenization of the equations of linear elasticity.

Besides having a microscopic structure, some important engineering and biological materials have symmetry properties.
It is possible for the response of a material (not necessarily a composite) to be unaffected by a change in reference configuration.  This leads to the notion of material symmetry.  Roughly speaking, the material symmetry group consists of all transformations of the reference configuration that leave the response of the material unchanged.  A connection between material symmetry and microstructure is well-known as different material symmetry groups are connected with different crystalline structures.  For a list of which material symmetry group is appropriate to assume for a given crystalline structure see, for example, Coleman and Noll \cite{CN} or Gurtin \cite{Gurtin}.  Thus, it is reasonable to conjecture, in the context of homogenization theory, that the properties of a material on the microscopic scale yield information about the material symmetry group on the macroscale.  In this paper we show that if the elasticity tensor describing the properties of the microstructure satisfies an invariance condition involving an affine transformation that preserves volume, see equation \eqref{microsym} in Section~\ref{sect2}, then the gradient of this transformation is a material symmetry of the macroscopic elasticity tensor.  Previous results of this type were established by Jikov, Kozlov, and Oleinik \cite{JKO} and Alexanderian, Rathinam, and Rostamian \cite{ARR}, but these authors only considered transformations that are rotations or reflections about a fixed point.  Examples of transformations not considered before include a translation together with a rotation or a unimodular transformation that is not a rotation.  A detailed discussion of how previous results compare with what is established in this paper is given after Theorem~\ref{thmmain} in Section~\ref{sectresult}.

One of the advantages of knowing the material symmetry group of the macroscopic elasticity tensor is related to numerics.  The components of the macroscopic tensor relative to a basis are obtained by solving unit cell problems, which are systems of elliptic partial differential equations.  Taking into account the major and minor symmetries of the elasticity tensor, the 21 elasticities are determined by solving six unit cell problems.  Hence, information that reduces the number of unknown components can decrease the number of unit cell problems needed to be solved and, thus, save considerable computational time.  Knowledge of the material symmetry group yields this kind of information.  For example, if the macroscopic elasticity tensor possesses cubic symmetry,  the nine unknown elasticities can be computed by solving only two unit cell problems.

The outline of the paper is as follows.  In Section~\ref{sect2}, the concept of a periodic elastic structure and an appropriate notion of symmetry are introduced.  Section~\ref{sectresult} contains the main result connecting symmetries of the periodic elastic structure with the material symmetry group of the macroscopic elasticity tensor.  In Section~\ref{sectEx}, several examples of periodic elastic structures are given and the corresponding material symmetry groups are mentioned.  One example showns that it is possible for the constituents of a composite to be completely anisotropic and yet the macroscopic elasticity tensor can have a symmetry group containing elements other than plus or minus the identity.  Another example demonstrates that there can be elements of the material symmetry group of the macroscopic elasticity tensor that are not associated with any symmetry of the periodic elastic structure.  Finally, Section~\ref{sectCon} contains some concluding remarks.  For completeness, we include a derivation of the macroscopic elasticity tensor in the Appendix.

\section{Periodic elastic structures}\label{sect2}

In this section we introduce the concept of a periodic (linearly) elastic structure and discuss the appropriate notion of symmetry associated with this structure.  Such symmetries are generalizations of the transformations used when discussing material symmetry.

For the purposes of this paper it is useful to distinguish between points and vectors and, hence, work in a Euclidean point space $\cE$ with a vector space $\cV$ equipped with an inner-product.  Lowercase letters such as $y$ or $z$ will denote elements of $\cE$ and are points, while boldface lowercase letters such as $\ba$ and $\bb$ denote vectors.  Boldface uppercase letters, such as $\bH$, will denote linear mappings from $\cV$ to itself.  The inner-product on $\cV$ will be used to identify $\cV$ with its dual.  It follows that given a linear mapping $\bH$ from $\cV$ to itself, we can consider its transpose $\bH^\top$ also to be a linear mapping from $\cV$ to itself.

Consider a parallelepiped $\cY$ in $\cE$.  The cases when $\cE$ is two- or three-dimensional are of primary interest, however here it is assumed that $\cE$ is $n$-dimensional, with $n\in\Nat$.  The set $\cY$ can be used to form a periodic tessellation of $\cE$.  Let $\bb_i$, $i=1,\dots,n$, be the linearly independent vectors in $\cV$ that are the length of and parallel to the edges of $\cY$; see Figure~\ref{figtesseqmicro}(a).  For all $z\in\cE$, there are unique integers $k_i$, $i=1,\dots,n$, such that $z-\sum_{i=1}^nk_i \bb_i$ is an element of $Y$.  Thus, we can define 
\beqn\label{unity}
\{z\}_Y:=z-\sum_{i=1}^nk_i \bb_i\in Y.
\eeqn

Here we are interested in linearly elastic materials that have a periodic structure.  To this end, consider a position dependent elasticity tensor $\bbC$ defined on $\cY$ and extended $Y$-periodically to all of $\cE$ so that
\beqn\label{perext}
\bbC(z)=\bbC(\{z\}_\cY)\rfa z\in\cE.
\eeqn
The pair $(\cY,\bbC)$ is referred to as a periodic (linearly) elastic structure.  In such a construction, $\cY$ is called the unit cell.

Consider another parallelepiped $\hat\cY$ in $\cE$ and elasticity tensor $\hat\bbC$ defined on $\hat\cY$ so that $(\hat\cY,\hat\bbC)$ is another periodic elastic structure.  We say that the periodic elastic structures $(\cY,\bbC)$ and $(\hat\cY,\hat\bbC)$ are equivalent if the periodic extension of $\bbC$ relative to $\cY$ is equal to the periodic extension of $\hat\bbC$ relative to $\hat\cY$.  With the aid of \eqref{perext}, this can be expressed as
$$
\bbC(\{z\}_{Y})=\hat\bbC(\{z\}_{\hat Y})\rfa z\in \cE.
$$
It is possible for the two periodic elastic structures to be equivalent even when $Y\not=\hat Y$.  To see this, consider a periodic elastic structure $(\cY,\bbC)$ and let $\ba$ be a vector.  Extend $\bbC$ as in \eqref{perext} and define $\bbC_\ba$ to be its restriction to the set $\ba+\cY$, which is the translation of the set $Y$ by $\ba$.  The structure $(\ba+\cY,\bbC_\ba)$ is equivalent to $(\cY,\bbC)$.

Here we consider symmetries of a periodic elastic structure generated by affine transformations that preserve volume.  Such a transformation $h$ of $\cE$ has the form
\beqn\label{isometry}
h(z)=z_\circ+\ba+\bH(z-z_\circ)\rfa z\in\cE,
\eeqn
where $z_\circ$ is a point in $\cE$, $\ba$ is a vector, and $\bH$ is a unimodular linear mapping, so that $|\text{det}\,\bH|=1$.  To interpret $h$, first consider the case where $\bH$ is a rotation.  In that case, in \eqref{isometry} $\ba$ acts as a translation and $z_\circ$ is the point about which the rotation $\bH$ takes place.  When $\bH$ is not a rotation, a similar interpretation applies but instead of the space being rotated about $z_\circ$, it is transformed about $z_\circ$ by $\bH$.  Unless otherwise stated, when speaking of a symmetry $h$, it is assumed to have the form \eqref{isometry}.

Some explanation to why transformations of the form \eqref{isometry} is warranted.  When discussing material symmetry of a material at a point $z_\circ$, transformations of the form
\beqn\label{csym}
g(z)=z_\circ+\bG(z-z_\circ)
\eeqn
are considered.  See, for example, equation (50.6) on page 285 of Gurtin, Fried, and Anand \cite{GFA}.  It is assumed that $\bG$ is unimodular so that $g$ preserves volume.  This ensures that the mass density of the material at $z_\circ$ is preserved.  See page 11 of Epstein and El$\dot{\text{z}}$anowski \cite{EE}.  The transformations considered in \eqref{isometry} also include a translation by a vector $\ba$ since we are interested in symmetries of periodic structures, which, by definition, are invariant under certain translations.  One can think of the transformations of the form \eqref{isometry} as a merger of the transformations considered for material symmetry and those for periodic structures.


Let $\Sym$ denote the set of all symmetric linear mappings from $\cV$ to itself.

\begin{figure}
\includegraphics[width=3in]{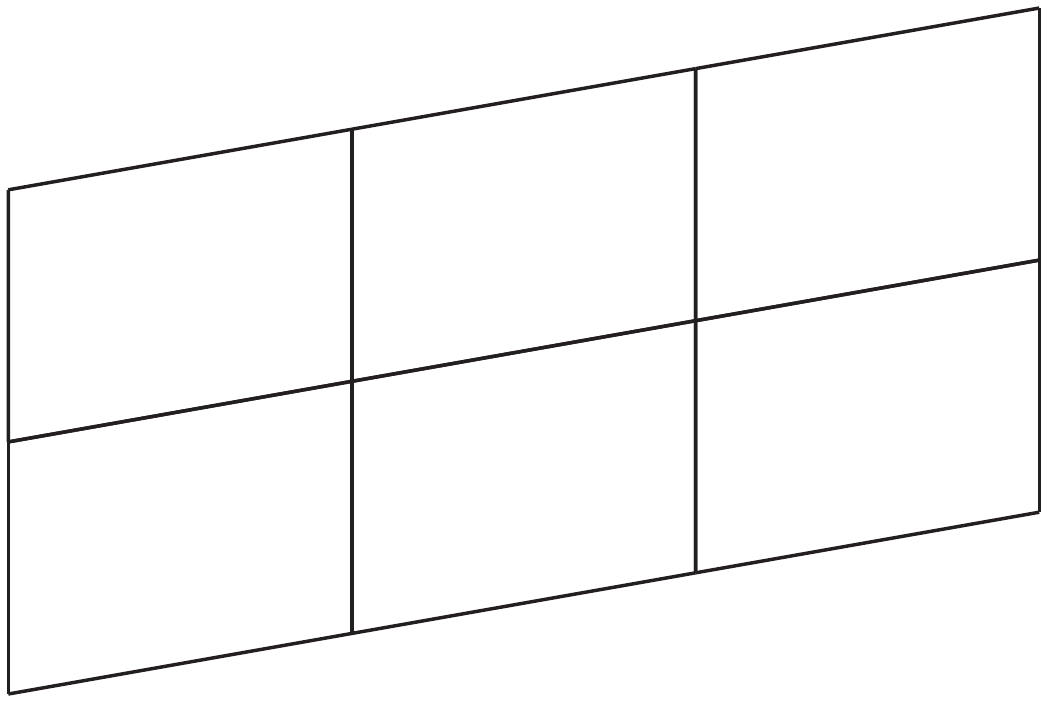}
\includegraphics[width=2.8in]{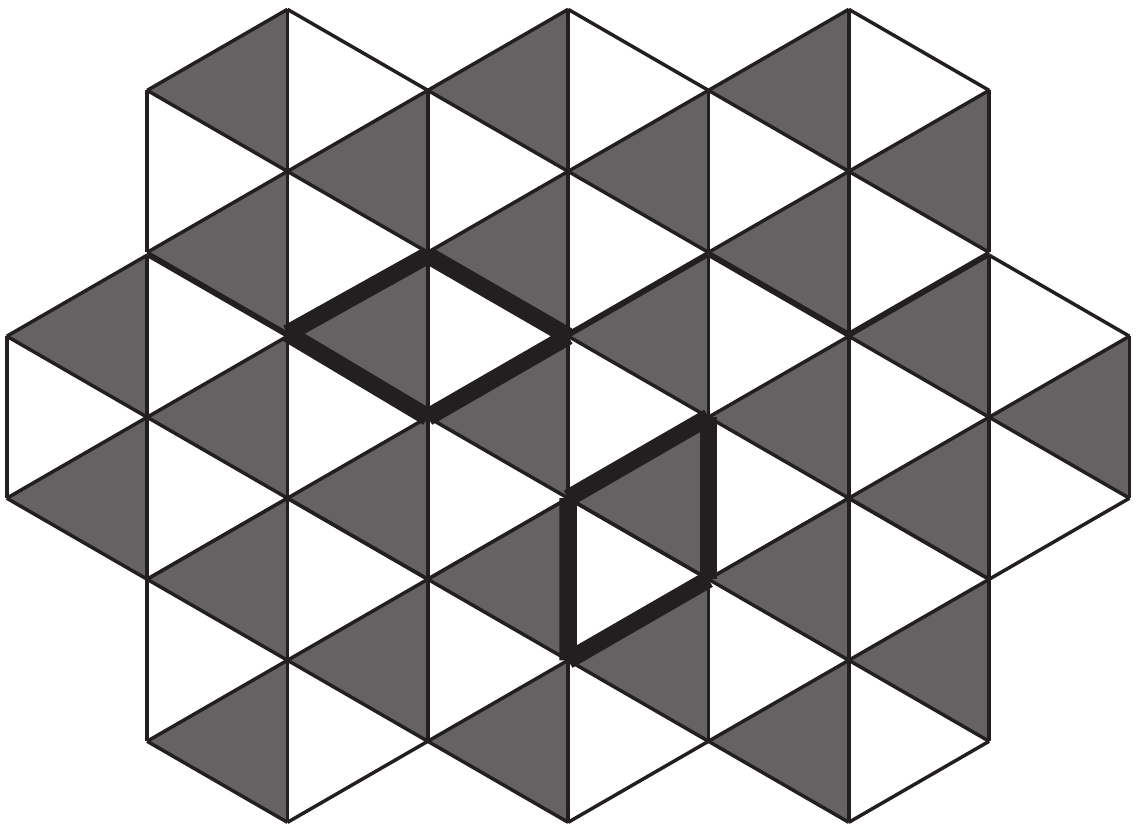}
\thicklines
\put(-315,45){$\cY$}
\put(-351,21.3){\rotatebox[origin=c]{10}{$\vector(1,0){72.5}$}}
\put(-310,12){$\bb_1$}
\put(-359.2,36){\rotatebox[origin=c]{90}{$\vector(1,0){52}$}}
\put(-365,35){$\bb_2$}
\put(-93,50){$\cY$}
\put(-131,83){$\hat\cY$}
\put(-65,100){$h$}
\put(-115,92){\rotatebox[origin=c]{-165}{$\vector(1,0){0}$}}
\qbezier(-73,63)(-55,125)(-112,102)
\put(-56,36){\rotatebox[origin=c]{30}{$\vector(1,0){28}$}}
\put(-61,39){\rotatebox[origin=c]{90}{$\vector(1,0){28}$}}
\put(-40,27){$\bb_1$}
\put(-64,38){$\bb_2$}
\put(-156,129){\rotatebox[origin=c]{150}{$\vector(1,0){28}$}}
\put(-150,115){\rotatebox[origin=c]{210}{$\vector(1,0){28}$}}
\put(-140,140){$\hat\bb_1$}
\put(-140,113){$\hat\bb_2$}
\put(-310,-15){(a)}
\put(-108,-15){(b)}
\caption{(a) A two-dimensional example of a unit cell $\cY$ and the periodic tessellation associated with it.  The vectors $\bb_1$ and $\bb_2$ are the length of and parallel to the sides of $\cY$ and generate the tessellation in that translating $\cY$ by all integer linear combinations of $\bb_1$ and $\bb_2$ covers the plane. (b) An example of a periodic elastic structure made up of two isotropic elastic materials that can be generated by two different unit cells $\cY$ and $\hat\cY$ that are related through $h$, which is a symmetry of the periodic elastic structure.}
\label{figtesseqmicro}
\end{figure}

\begin{definition}\label{defmicrosym}
A transformation $h$ of the form \eqref{isometry} is a symmetry of the periodic elastic structure $(\cY,\bbC)$ if\,
\beqn\label{microsym}
\bbC(z)\bE= \bH[\bbC(h^{-1}(z))(\bH^\top\bE\bH)]\bH^\top\qquad \text{for all}\ z\in \cE,\ \bE\in\text{\rm Sym}.
\eeqn
\end{definition}

\noindent To understand the motivation behind Definition \ref{defmicrosym}, consider in \eqref{microsym} instead of $h$ the transformation $g$ defined in \eqref{csym} instead of $h$ in \eqref{microsym}.  Then \eqref{microsym} at $z=z_\circ$ would read
$$\bbC(z_\circ)\bE= \bG[\bbC(z_\circ)(\bG^\top\bE\bG)]\bG^\top\quad \text{for all}\ \bE\in\text{\rm Sym},$$
which would imply that the material point $z_\circ$ has $\bG$ as a material symmetry.  See, for example, equation (52.21) on page 300 of Gurtin, Fried, and Anand \cite{GFA}.  Thus, Definition \ref{defmicrosym} should be viewed as a generalization of the definition of material symmetry to periodic elastic structures.

It is easily seen that the collection of symmetries of a periodic elastic structure form a group under function composition.  For any unimodular linear mapping $\bH$, define
\beqn\label{notation}
\bbC_\bH(z)\bE:=\bH[\bbC(z)(\bH^{\small \top}\bE\bH)]\bH^\top\rfa z\in\cE,\ \bE\in\text{Sym}.
\eeqn
If $\bbC$ is periodic, then so is $\bbC_\bH$. Using the notation in \eqref{notation}, $h$ is a symmetry of the periodic elastic structure if and only if
\beqn\label{microsymN}
\bbC(z)= \bbC_\bH(h^{-1}(z))\qquad \text{for all}\ z\in \cE.
\eeqn
Notice that translations by integer linear combination of $\bb_i$, where $i=1,\dots,n$, are symmetries of the periodic elastic structure.

\begin{proposition}\label{symper}
If $h$ is a symmetry of the periodic elastic structure $(\cY,\bbC)$, then $\bbC$ is $\hat\bb_i$-periodic where $\hat\bb_i:=\bH\bb_i$.
\end{proposition}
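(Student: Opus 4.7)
The plan is to start from the symmetry condition in its compact form \eqref{microsymN}, apply it to the translated point $z + \hat\bb_i$, and show that the pullback through $h^{-1}$ converts this translation by $\hat\bb_i = \bH\bb_i$ into a translation by $\bb_i$, which is a period of $\bbC$ (and hence of $\bbC_\bH$).

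First I would write down an explicit formula for $h^{-1}$. Since $h(z) = z_\circ + \ba + \bH(z - z_\circ)$ with $\bH$ invertible, the inverse is
\begin{equation*}
h^{-1}(w) = z_\circ + \bH^{-1}(w - z_\circ - \ba) \quad \text{for all } w \in \cE.
\end{equation*}
From this, a direct computation gives
\begin{equation*}
h^{-1}(z + \hat\bb_i) = h^{-1}(z) + \bH^{-1} \hat\bb_i = h^{-1}(z) + \bH^{-1}\bH\bb_i = h^{-1}(z) + \bb_i,
\end{equation*}
which is the key geometric observation.

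Next I would note that $\bbC_\bH$ inherits $\bb_i$-periodicity from $\bbC$: this is immediate from the definition \eqref{notation}, since $\bbC_\bH(w)\bE$ depends on $w$ only through $\bbC(w)$, and $\bbC(w + \bb_i) = \bbC(w)$ for all $w \in \cE$ by the original periodicity of $\bbC$ (extended via \eqref{perext}). Consequently,
\begin{equation*}
\bbC_\bH(h^{-1}(z) + \bb_i) = \bbC_\bH(h^{-1}(z)).
\end{equation*}

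Finally I would chain these together using the symmetry condition \eqref{microsymN} twice:
\begin{equation*}
\bbC(z + \hat\bb_i) = \bbC_\bH(h^{-1}(z + \hat\bb_i)) = \bbC_\bH(h^{-1}(z) + \bb_i) = \bbC_\bH(h^{-1}(z)) = \bbC(z),
\end{equation*}
which establishes $\hat\bb_i$-periodicity. There is no real obstacle here; the only step that requires a moment of care is verifying that the translation $\ba$ and the centering point $z_\circ$ in the definition of $h$ cancel out cleanly when comparing $h^{-1}(z + \hat\bb_i)$ with $h^{-1}(z)$, which they do because $h^{-1}$ is affine with linear part $\bH^{-1}$.
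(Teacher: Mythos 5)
Your proof is correct and follows essentially the same route as the paper's: both arguments use the affine structure of $h$ to convert translation by $\hat\bb_i$ into translation by $\bb_i$, invoke the $\bb_i$-periodicity of $\bbC_\bH$, and apply \eqref{microsymN} twice. The only cosmetic difference is that you work directly with $h^{-1}$ while the paper substitutes $z:=h^{-1}(\hat z)$ and writes the chain in terms of $h$.
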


\begin{proof}
Consider $\hat z\in\cE$ and set $z:=h^{-1}(\hat z)$.  From \eqref{microsymN} and the fact that $\bbC$ is $\bb_i$-periodic, we have
\begin{align*}
\bbC(\hat z+\hat\bb_i)=\bbC(h( z+\bb_i))=\bbC_{\bH}(z+\bb_i)=\bbC_{\bH}(z)=\bbC(\hat z),
\end{align*}
and so $\bbC$ is $\hat \bb_i$-periodic.
\end{proof}

The next result says that periodic elastic structures related by symmetries are equivalent.

\begin{proposition}\label{equivmicro}
Let $(\cY,\bbC)$ be a periodic elastic structure with symmetry $h$, and set $\hat \cY:=h(\cY)$. If $\hat\bbC$ is defined by
\beqn\label{barbbC}
\hat\bbC(\hat y):=\bbC_{\bH}(h^{-1}(\hat y))\rfa \hat y\in \hat \cY, 
\eeqn
then the periodic elastic structures $(\cY,\bbC)$ and $(\hat\cY,\hat\bbC)$ are equivalent.
\end{proposition}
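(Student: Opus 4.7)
The plan is to show that $(\cY,\bbC)$ and $(\hat\cY,\hat\bbC)$ produce the same periodic extension to $\cE$, which is exactly what equivalence requires: $\bbC(\{z\}_Y)=\hat\bbC(\{z\}_{\hat Y})$ for every $z\in\cE$. The two crucial ingredients are the defining formula \eqref{barbbC} for $\hat\bbC$ combined with the symmetry condition \eqref{microsymN}, together with the double periodicity of $\bbC$ supplied by Proposition~\ref{symper}.

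First I would verify that $\hat\cY=h(\cY)$ is itself a parallelepiped whose edge vectors are $\hat\bb_i=\bH\bb_i$; this is immediate from the affine form \eqref{isometry} of $h$ and ensures that the operation $\{\cdot\}_{\hat Y}$ is well-defined and corresponds to subtracting an integer linear combination of the $\hat\bb_i$. Next, I would combine \eqref{barbbC} with \eqref{microsymN} to observe that $\hat\bbC$ agrees with $\bbC$ pointwise on $\hat\cY$: for $\hat y\in\hat\cY$,
\[
\hat\bbC(\hat y)=\bbC_\bH(h^{-1}(\hat y))=\bbC(\hat y),
\]
the second equality being exactly \eqref{microsymN} evaluated at $z=\hat y$. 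Consequently $\hat\bbC(\{z\}_{\hat Y})=\bbC(\{z\}_{\hat Y})$ for every $z\in\cE$.

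The final step is to reconcile $\bbC(\{z\}_{\hat Y})$ with $\bbC(\{z\}_Y)$. By construction, $\{z\}_{\hat Y}$ differs from $z$ by an integer linear combination of the $\hat\bb_i$, so the $\hat\bb_i$-periodicity of $\bbC$ given by Proposition~\ref{symper} yields $\bbC(\{z\}_{\hat Y})=\bbC(z)$; similarly the original $\bb_i$-periodicity gives $\bbC(\{z\}_Y)=\bbC(z)$, and chaining the identities delivers the claim. I do not anticipate a genuine obstacle here: once Proposition~\ref{symper} is in hand, the argument is essentially a bookkeeping exercise. The only point requiring mild care is keeping track of which lattice generates which tessellation, so that the two periodicities of $\bbC$ are invoked against the correct $\{\cdot\}$-operation.
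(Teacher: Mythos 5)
Your proof is correct, but it is organized differently from the paper's. The paper's proof never invokes Proposition~\ref{symper}; instead it unfolds everything from the definition: starting from $\hat\bbC(\hat z)=\hat\bbC(\{\hat z\}_{\hat\cY})=\bbC_\bH(h^{-1}(\{\hat z\}_{\hat\cY}))$, it shows by a direct computation that $\{h^{-1}(\{\hat z\}_{\hat\cY})\}_\cY=\{h^{-1}(\hat z)\}_\cY$ and then uses the $Y$-periodicity of $\bbC_\bH$ together with \eqref{microsymN} to conclude $\hat\bbC(\hat z)=\bbC(\hat z)$. You instead make the immediate observation that \eqref{barbbC} combined with \eqref{microsymN} (which holds for every $z\in\cE$) gives $\hat\bbC=\bbC$ pointwise on $\hat\cY$, and then reduce the whole question to the double periodicity of $\bbC$: the $\bb_i$-periodicity from \eqref{perext} handles $\{\cdot\}_Y$ and the $\hat\bb_i$-periodicity from Proposition~\ref{symper} handles $\{\cdot\}_{\hat Y}$, so both reductions return $\bbC(z)$ and the chain closes. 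Your route is shorter and makes the logical dependence transparent --- Proposition~\ref{symper} is exactly the ``bookkeeping'' fact needed --- at the modest cost of relying on that proposition, whereas the paper's version is self-contained and in effect re-derives the needed periodicity inline through $\bbC_\bH$. Both arguments establish the stronger conclusion, noted in the paper's remark after the proposition, that the two periodic extensions coincide as functions on all of $\cE$.
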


\begin{proof}
It must be shown that the periodic extensions of $\bbC$ and $\hat\bbC$ relative to $\cY$ and $\hat \cY$, respectively, are equal.  Let $\hat z\in\cE$ be given and set $\hat\bb_i:=\bH\bb_i$, for $i=1,\dots,n$.  Define $\{\hat z\}_{\hat Y}$ analogous to \eqref{unity}, so that there are integers $\hat k_i$, $i=1,\dots,n$, such that
\beqn\label{unitybar}
\{\hat z\}_{\hat \cY} = \hat z - \sum_{i=1}^n\hat k_i \hat\bb_i\in\hat\cY.
\eeqn
By \eqref{barbbC} and the periodicity of $\bbC$ and $\hat\bbC$, see \eqref{perext}, and the definition of $\bbC_\bH$ in \eqref{notation}, we have
\beqn\label{cor1}
\hat\bbC(\hat z) = \hat\bbC(\{\hat z\}_{\hat\cY}) = \bbC_{\bH}(h^{-1}(\{\hat z\}_{\hat\cY}))= \bbC_{\bH}(\{h^{-1}(\{\hat z\}_{\hat\cY})\}_\cY).
\eeqn
Applying $h^{-1}$ to \eqref{unitybar} and using the definition of $\hat\bb_i$ yields
$$
h^{-1}(\{\hat z\}_{\hat \cY}) = h^{-1}(\hat z)-\sum_{i=1}^n\hat k_i\bb_i,
$$
from whence it follows that
$$
\{h^{-1}(\{\hat z\}_{\hat \cY})\}_\cY = \{h^{-1}(\hat z)\}_\cY.
$$
Using this last equality in \eqref{cor1}, the periodicity of $\bbC_{\bH}$, and equation \eqref{microsymN} results in
$$
\hat\bbC(\hat z) = \bbC_{\bH}(\{h^{-1}(\hat z)\}_\cY) = \bbC_{\bH}(h^{-1}(\hat z))=\bbC(\hat z).
$$
Since the above equation holds for any $\hat z\in \cE$, it follows that $\bbC=\hat\bbC$ and, hence, the elastic structures are equivalent.
\end{proof}

\noindent A consequence of Proposition~\ref{equivmicro} is that the elasticity tensors $\bbC$ and $\hat\bbC$ are equal in $\cE$, and so $(Y,\bbC)$ and $(\hat Y,\bbC)$ are equivalent elastic structures.  This fact will be used in the next section.

To illustrate the results of Propositions \ref{symper} and \ref{equivmicro}, consider the tessellation of the plane depicted in Figure~\ref{figtesseqmicro}(b).  Define $\bbC$ so that it equals one isotropic tensor on the white triangles and another isotropic tensor on the grey triangles.  This elastic structure can be generated by the unit cells $\cY$ and $\hat\cY$ shown in the figure.  The transformation $h$ that takes $\cY$ to $\hat\cY$ is a symmetry of the periodic elastic structure.  Moreover, the vectors $\hat\bb_1$ and $\hat\bb_2$ associated with the $\hat Y$-periodicity of the structure are related to the vectors $\bb_1$ and $\bb_2$ associated with the $Y$-periodicity of the structure through the gradient of $h$---that is, $\bb_i=\bH\hat\bb_i$, $i=1,2$.


\section{The macroscopic elasticity tensor}\label{sectresult}

Here we recall the macroscopic elasticity tensor of a composite material obtained by homogenizing the equations of linear elasticity associated with a composite with periodic microstructure.  After this, we prove our main result connecting the symmetries of the microstructure with the material symmetries of the macroscopic elasticity tensor.

Homogenization of the equations of linear elasticity is classical.  See, for example,  \cite{CD,JKO,OSY}.  For the sake of completeness, a sketch of the derivation of the macroscopic equations of linear elasticity in a domain with periodic microstructure is given in the Appendix.

Here we need to recall the formula for the macroscopic elasticity tensor.  The starting point is to consider a linearly elastic material with a periodic microstructure.  We consider the periodic elastic microstructure given by
\beqn\label{bbCscale}
\bbC^\ve(z)=\bbC\Big(\frac{z-q}{\ve}+q\Big)\rfa z\in\cE,
\eeqn
where $q$ is an arbitrary, but fixed, point, $\ve$ is a small parameter associated with the length scale of the microstructure, and $(Y,\bbC)$ is a periodic elastic structure.  If $\cY^\ve$ is the result of scaling $\cY$ by $\ve$ about the point $q$, then $(\cY^\ve,\bbC^\ve)$ is a periodic elastic structure, which can be viewed as a microstructure since $\ve$ is small; see Figure~\ref{figscale}.  

Homogenization theory provides techniques to derive an effective elasticity tensor $\bbC^0$ that describes a material whose behavior approximates that of a material with periodic microstructure $(\cY^\ve,\bbC^\ve)$ for small $\ve$.  The effective elasticity tensor $\bbC^0$ is given by
\beqn\label{hombbC}
\bbC^0\bE = \dashint_\cY \bbC(y)\big[\bE+\nabla_y\bw^\bE(y)\big]\, dy\rfa \bE\in \text{Sym},
\eeqn
where the symbol $\dashint$ denotes the average integral and\footnote{See the Appendix for the definitions of the function spaces $H^1_\text{per}(Y,\cV)$ and $\cW_\text{per}(Y,\cV)$.} $\bw^\bE\in \cW_\text{per}(Y,\cV)$ is the unique solution of
\begin{equation}\label{unitcell}
\displaystyle\int_\cY \bbC(y)\big[\bE+\nabla_y\bw^\bE(y)\big]\cdot\nabla_y\bv(y)\, dy = 0 \rfa \bv\in H^1_\text{per}(\cY,\cV).
\end{equation}
See \eqref{unitcellA} and \eqref{hombbCA} in the Appendix for more details.  Since $\bbC$ is an elasticity tensor, it possesses major and minor symmetries so that, in particular, for any linear mapping $\bA$, we have $\bbC\bA=\bbC\big[\frac{1}{2}(\bA+\bA^\top)\big]$.  Thus, only the symmetric part of $\nabla_{y}\bw^\bE$ has an influence in \eqref{hombbC} and \eqref{unitcell}.  

%

\begin{figure}
\hspace{1.5in}\includegraphics[height=2in]{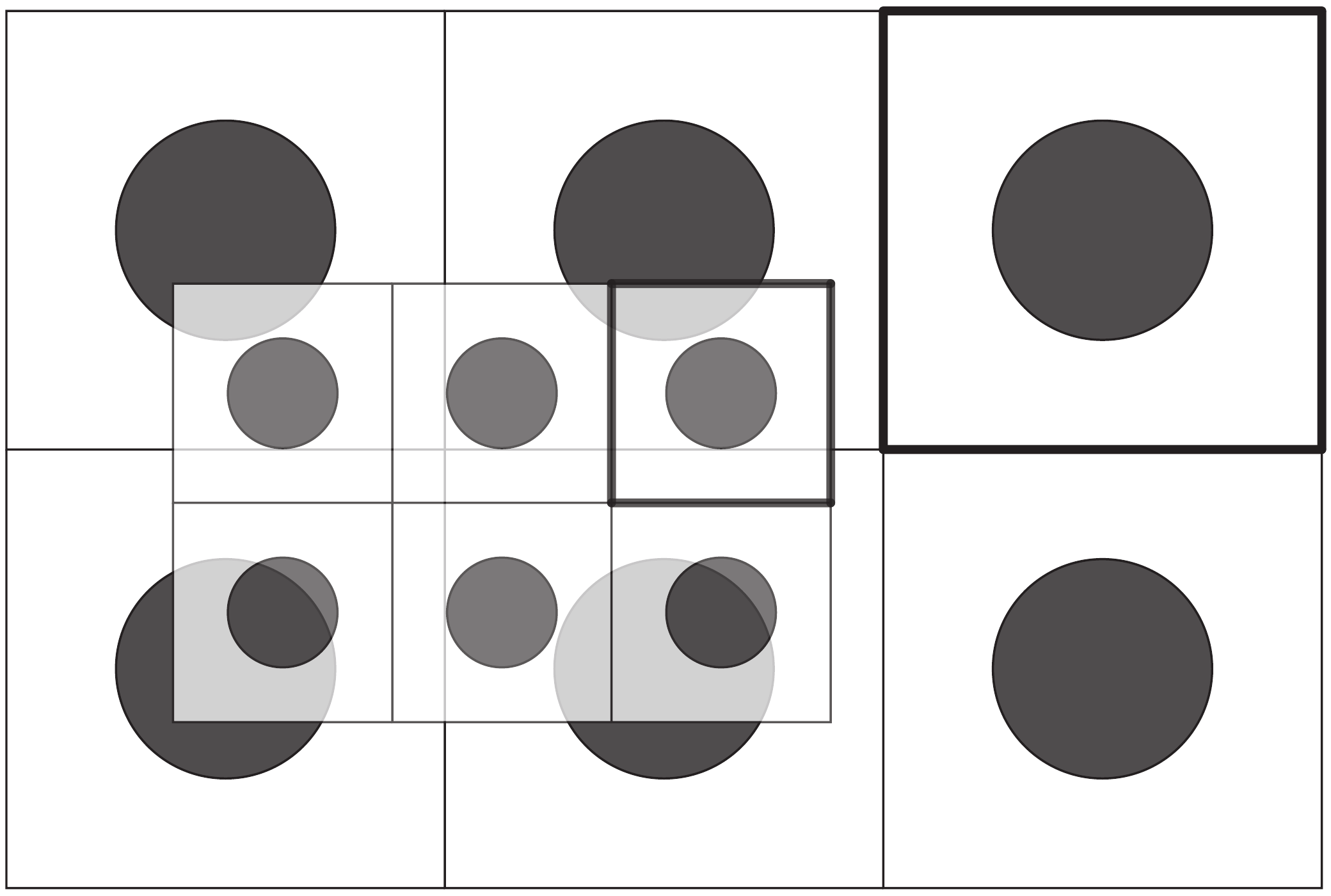}
\put(-30,80){$Y$}
\put(-100,70){$Y^\ve$}
\put(-161,53){$\bullet$}
\put(-155,53){$q$}
\caption{A depiction of a periodic elastic structure together with a scaling of it.  The unit cell $Y$ generates the initial structure, and the grey and white regions represent different elastic materials.  The smaller, transparent structure generated by the unit cell $Y^\ve$ is the result of scaling the original structure about the point $q$ by $\ve=1/2$.}
\label{figscale}
\end{figure}

The macroscopic elasticity tensor $\bbC^0$ can be given in terms of an equivalent periodic elastic structure with a different unit cell.  In particular, if $h$ is a symmetry of $(\cY,\bbC)$ and $\hat \cY:=h(\cY)$, then the remark made after Proposition~\ref{equivmicro} says that $(\hat Y,\bbC)$ is an equivalent elastic structure.  The homogenization procedure can be carried out using $\hat Y$ instead of $Y$ to find that $\bbC^0$ is also given by
\beqn\label{hombbCbar}
\bbC^0\bE = \dashint_{\hat \cY} \bbC(\hat y)\big[\bE+\nabla_{\hat y}\hat\bw^\bE(\hat y)\big]d\hat y\rfa \bE\in \Lin,
\eeqn
where $\hat\bw^\bE\in \cW_\text{per}(\hat Y,\cV)$ is the unique solution of
\begin{equation}\label{unitcellbar}
\displaystyle\int_{\hat \cY} \bbC(\hat y)\big[\bE+\nabla_{\hat y}\hat \bw^\bE(\hat y)\big]\cdot\nabla_{\hat y}\hat\bv(\hat y)\, d\hat y = 0 \rfa \hat\bv\in H^1_\text{per}(\hat \cY,\cV).
\end{equation}

The following result relates the solutions of the unit cell problems \eqref{unitcell} and \eqref{unitcellbar}.
\begin{lemma}\label{lemisow}
The solutions of the unit cell problems \eqref{unitcell} and \eqref{unitcellbar} satisfy
\beqn\label{bwrel}
\nabla_{y}(\bw^{\bH^\top\bE\bH})(h^{-1}(\hat y)) = \bH^\top\nabla_{\hat y}\hat\bw^\bE(\hat y)\bH\qquad \text{for almost every}\ \hat y\in\hat \cY=h(\cY).
\eeqn

\end{lemma}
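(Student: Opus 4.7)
The plan is to construct, from $\hat\bw^\bE$, a candidate solution on $\cY$ for the unit cell problem with loading $\bH^\top\bE\bH$, and then invoke uniqueness. Concretely, define
$$\bw(y) := \bH^\top \hat\bw^\bE(h(y)), \qquad y\in \cY,$$
and show $\bw = \bw^{\bH^\top\bE\bH}$. The chain rule applied to this identity, together with the substitution $y = h^{-1}(\hat y)$, then yields \eqref{bwrel}.

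First I would verify that $\bw\in \cW_{\text{per}}(\cY,\cV)$. The periodicity of $\hat\bw^\bE$ with respect to $\hat\bb_i = \bH\bb_i$ (as guaranteed by Proposition \ref{symper}), combined with the identity $h(y+\bb_i) = h(y)+\hat\bb_i$ coming from \eqref{isometry}, shows that $\bw$ is $\bb_i$-periodic for each $i$. Since $h$ is affine with $|\det \bH|=1$, the change of variables $\hat y = h(y)$ transforms integrals over $\cY$ into integrals over $\hat\cY$ with unit Jacobian, so any zero-mean or normalization condition defining the space $\cW_{\text{per}}$ is preserved.

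Next I would verify that $\bw$ solves the unit cell problem \eqref{unitcell} with $\bE$ replaced by $\bH^\top\bE\bH$. Given an arbitrary test function $\bv\in H^1_{\text{per}}(\cY,\cV)$, define $\hat\bv(\hat y) := \bH\bv(h^{-1}(\hat y))$, which lies in $H^1_{\text{per}}(\hat\cY,\cV)$ by the same periodicity argument. Substituting $\hat\bv$ into \eqref{unitcellbar} and making the change of variables $\hat y = h(y)$, the chain rule gives $\nabla_{\hat y}\hat\bv(\hat y) = \bH\nabla_y\bv(y)\bH^{-1}$ and $\nabla_{\hat y}\hat\bw^\bE(h(y)) = \bH^{-\top}\nabla_y\bw(y)\bH^{-1}$, so that
$$\bE + \nabla_{\hat y}\hat\bw^\bE(h(y)) = \bH^{-\top}\big[\bH^\top\bE\bH + \nabla_y\bw(y)\big]\bH^{-1}.$$
Applying \eqref{microsym} to rewrite $\bbC(h(y))$ acting on this bracket in terms of $\bbC(y)$ acting on $\bH^\top\bE\bH + \nabla_y\bw(y)$, and then using the algebraic identity $(\bH\bX\bH^\top)\cdot(\bH\bY\bH^{-1}) = \bX\cdot\bY$ (which follows from the cyclic property of the trace), the integrand collapses to $\bbC(y)[\bH^\top\bE\bH + \nabla_y\bw(y)]\cdot\nabla_y\bv(y)$.

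Since the correspondence $\bv\leftrightarrow\hat\bv$ is a bijection between $H^1_{\text{per}}(\cY,\cV)$ and $H^1_{\text{per}}(\hat\cY,\cV)$, the resulting identity holds for every admissible test function, and uniqueness of the solution to the unit cell problem forces $\bw = \bw^{\bH^\top\bE\bH}$. Differentiating the defining identity for $\bw$ and evaluating at $y = h^{-1}(\hat y)$ then gives \eqref{bwrel}. The main obstacle will be the bookkeeping in the third paragraph: one must carefully combine the symmetry identity \eqref{microsym} (which moves $\bH$ and $\bH^\top$ across $\bbC$) with the chain-rule factors attached to $\hat\bv$ and $\hat\bw^\bE$, and recognize the Frobenius-inner-product cancellation that removes all residual $\bH$'s.
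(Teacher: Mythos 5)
Your overall strategy is the mirror image of the paper's: the paper transports a test function from $\hat\cY$ back to $\cY$, plugs it into \eqref{unitcell} with $\bE$ replaced by $\bH^\top\bE\bH$, changes variables, and identifies $\bH^{-\top}\bw^{\bH^\top\bE\bH}\circ h^{-1}$ as the unique solution of \eqref{unitcellbar}; you run the same argument in the opposite direction, and your candidate $\bw(y)=\bH^\top\hat\bw^\bE(h(y))$ is exactly the relation the paper ends up proving. So the architecture is sound. However, there is a concrete error in the key cancellation step. The identity you invoke, $(\bH\bX\bH^\top)\cdot(\bH\bY\bH^{-1})=\bX\cdot\bY$, is false for general unimodular $\bH$: computing with $\bA\cdot\bB=\tr(\bA^\top\bB)$ and the cyclic property gives $(\bH\bX\bH^\top)\cdot(\bH\bY\bH^{-1})=\tr\big(\bX^\top(\bH^\top\bH)\bY\big)$, which equals $\bX\cdot\bY$ only when $\bH^\top\bH=\bI$, i.e.\ when $\bH$ is orthogonal. (Try $\bH=\mathrm{diag}(2,1/2)$ and $\bX=\bY=\be_1\otimes\be_1$.) Since the whole point of Theorem~\ref{thmmain} is to allow unimodular $\bH$ that are not orthogonal, your argument as written only covers the case already treated in \cite{JKO,ARR}.

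The source of the problem is your test-function transformation $\hat\bv(\hat y):=\bH\bv(h^{-1}(\hat y))$, which yields $\nabla_{\hat y}\hat\bv=\bH\nabla_y\bv\,\bH^{-1}$ and leaves a residual $\bH^\top\bH$ after pairing with $\bH[\,\cdot\,]\bH^\top$ coming from \eqref{microsym}. The correct choice is $\hat\bv(\hat y):=\bH^{-\top}\bv(h^{-1}(\hat y))$ (equivalently, the paper's $\bv(y)=\bH^\top\hat\bv(h(y))$), which gives $\nabla_{\hat y}\hat\bv=\bH^{-\top}\nabla_y\bv\,\bH^{-1}$; the relevant and true identity is then $(\bH\bX\bH^\top)\cdot(\bH^{-\top}\bY\bH^{-1})=\bX\cdot\bY$, and the integrand collapses to $\bbC(y)[\bH^\top\bE\bH+\nabla_y\bw(y)]\cdot\nabla_y\bv(y)$ as desired. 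This map is still a bijection of the periodic test spaces, so the rest of your argument (periodicity of $\bw$, unit Jacobian, uniqueness of the cell-problem solution, and the final chain-rule step yielding \eqref{bwrel}) goes through unchanged once this substitution is corrected. A minor further point worth a sentence in a full write-up: \eqref{microsym} is stated for symmetric arguments, so one should note that the minor symmetries of $\bbC$ let you replace $\bH^\top\bE\bH+\nabla_y\bw$ by its symmetric part, and that conjugation by $\bH^{-\top}(\cdot)\bH^{-1}$ commutes with taking symmetric parts.
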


\begin{proof}
Given $\hat \bv\in \cW_\text{per}(\hat Y,\cV)$, define
\beqn\label{barv}
\bv(y):=\bH^{\top}\hat\bv(h(y))\rfae y\in \cY.
\eeqn
Notice that since $\hat\bv\in\cW_\text{per}(\hat Y,\cV)$, then $\bv\in \cW_\text{per}(Y,\cV)$.  Hence, considering \eqref{unitcell} with $\bE$ replaced by $\bH^\top\bE\bH$ results in
$$
\int_\cY \bbC(y)\big[\bH^\top\bE\bH+\nabla_y(\bw^{\bH^\top\bE\bH})(y)\big]\cdot\nabla_y\bv(y)\, dy=0.
$$
Using the change of variables $\hat y=h(y)$ and the definition of $\bv$ in \eqref{barv} yields
\beqn\label{bla}
\int_{\hat \cY} \bbC(h^{-1}(\hat y))\big[\bH^\top\bE\bH+\nabla_y(\bw^{\bH^\top\bE\bH})(h^{-1}(\hat y))\big]\cdot\bH^{\top}\nabla_{\hat y}\hat \bv(\hat y)\bH\, d\hat y = 0.
\eeqn
Since $h$ is a symmetry of the elastic structure we have that $\bbC$ satisfies \eqref{microsym}, and so \eqref{bla} becomes
\beqn\label{bla2}
\int_{\hat \cY} \bbC(\hat y)\big[\bE+\bH^{-\top}\nabla_y(\bw^{\bH^\top\bE\bH})(h^{-1}(\hat y))\bH^{-1}\big]\cdot\nabla_{\hat y}\hat \bv(\hat y)\, d\hat y = 0.
\eeqn
Notice that $\bH^{-\top}\nabla_y(\bw^{\bH^\top\bE\bH})(h^{-1}(\hat y))\bH^{-1}=\nabla_{\hat y}(\bH^{-\top}\bw^{\bH^\top\bE\bH}\circ h^{-1})(\hat y)$ for almost every $\hat y\in Y$, and so \eqref{bla2} can be written as
\beqn\label{bla3}
\int_{\hat \cY} \bbC(\hat y)\big[\bE+\nabla_{\hat y}(\bH^{-\top}\bw^{\bH^\top\bE\bH}\circ h^{-1})(\hat y)\big]\cdot\nabla_{\hat y}\hat \bv(\hat y)\, d\hat y = 0.
\eeqn
Since \eqref{bla3} holds for all $\hat\bv\in \cW_\text{per}(\hat Y,\cV)$ and the solution of \eqref{unitcellbar} is unique, this proves the lemma.
\end{proof}


Using Lemma~\ref{lemisow} we can now prove the main result of the paper.

\begin{theorem}\label{thmmain}
If $h$ is a symmetry of the periodic elastic structure, then
\beqn\label{mainresult}
\bbC^0\bE = \bH[\bbC^0(\bH^\top\bE\bH)]\bH^\top\rfa \bE\in\text{\rm Sym}.
\eeqn
\end{theorem}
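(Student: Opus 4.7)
The plan is to compute the right-hand side of \eqref{mainresult}, starting from the definition \eqref{hombbC} of $\bbC^0$ applied with input $\bH^\top\bE\bH$, and then massage the expression using the change of variables $\hat y=h(y)$, the Lemma, the symmetry condition \eqref{microsym}, and finally the alternative formula \eqref{hombbCbar}. Since $\bE\in\Sym$ and $\bH^\top\bE\bH$ is also symmetric, the formula \eqref{hombbC} can be applied to it; this is the natural starting point.

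First I would write
$$
\bH\big[\bbC^0(\bH^\top\bE\bH)\big]\bH^\top=\bH\!\left[\dashint_\cY \bbC(y)\big[\bH^\top\bE\bH+\nabla_y\bw^{\bH^\top\bE\bH}(y)\big]dy\right]\!\bH^\top,
$$
pull the linear operators $\bH(\cdot)\bH^\top$ inside the integral, and then perform the change of variable $\hat y=h(y)$. Because $h$ is affine with unimodular gradient $\bH$, it preserves volume, so $d\hat y=dy$ and $|\hat\cY|=|\cY|$; the average integral over $\cY$ becomes an average integral over $\hat\cY=h(\cY)$. This yields
$$
\bH\big[\bbC^0(\bH^\top\bE\bH)\big]\bH^\top=\dashint_{\hat\cY}\bH\!\left[\bbC(h^{-1}(\hat y))\big[\bH^\top\bE\bH+\nabla_y\bw^{\bH^\top\bE\bH}(h^{-1}(\hat y))\big]\right]\!\bH^\top d\hat y.
$$

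Next I would substitute Lemma~\ref{lemisow} to rewrite $\nabla_y\bw^{\bH^\top\bE\bH}(h^{-1}(\hat y))=\bH^\top\nabla_{\hat y}\hat\bw^\bE(\hat y)\bH$, so the bracketed argument of $\bbC(h^{-1}(\hat y))$ factors as $\bH^\top[\bE+\nabla_{\hat y}\hat\bw^\bE(\hat y)]\bH$. Now the integrand has the exact form on which the symmetry relation \eqref{microsym} acts. A small subtlety is that $\bE+\nabla_{\hat y}\hat\bw^\bE(\hat y)$ is not necessarily symmetric, but by the minor symmetries of $\bbC$ only its symmetric part contributes (as noted right after \eqref{unitcell}), so \eqref{microsym} applies and gives
$$
\bH\!\left[\bbC(h^{-1}(\hat y))\big(\bH^\top[\bE+\nabla_{\hat y}\hat\bw^\bE(\hat y)]\bH\big)\right]\!\bH^\top=\bbC(\hat y)\big[\bE+\nabla_{\hat y}\hat\bw^\bE(\hat y)\big].
$$
Integrating over $\hat\cY$ and invoking the alternative homogenized formula \eqref{hombbCbar} then identifies the result as $\bbC^0\bE$, completing the proof.

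The main obstacle is mostly bookkeeping: correctly tracking how $\bH$ and $\bH^\top$ migrate through the linear operation $\bbC(\cdot)$ and the gradient. In particular, one must justify pulling $\bH(\cdot)\bH^\top$ inside the integral (linearity), must verify that the change of variables preserves the average (unimodularity of $\bH$ and affineness of $h$), and must remember the minor-symmetry remark to legally apply \eqref{microsym} to a nonsymmetric argument. The deeper structural step---the transformation of the corrector $\bw^{\bH^\top\bE\bH}$ into $\hat\bw^\bE$---has already been isolated in Lemma~\ref{lemisow}, so after that the argument is a direct chain of substitutions.
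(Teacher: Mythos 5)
Your proposal is correct and follows essentially the same route as the paper: formula \eqref{hombbC} applied to $\bH^\top\bE\bH$, the change of variables $\hat y=h(y)$, Lemma~\ref{lemisow}, the symmetry condition \eqref{microsym}, and finally \eqref{hombbCbar}; the only cosmetic difference is that you multiply by $\bH(\cdot)\bH^\top$ at the outset whereas the paper solves for it at the end, arriving at $\bbC^0(\bH^\top\bE\bH)=\bH^{-1}(\bbC^0\bE)\bH^{-\top}$. Your explicit remark that the minor symmetries of $\bbC$ justify applying \eqref{microsym} to the possibly nonsymmetric argument $\bE+\nabla_{\hat y}\hat\bw^\bE(\hat y)$ is a point the paper leaves implicit, and is a welcome addition.
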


\begin{proof}
Let $\bE\in \Sym$ be given and set $\hat\cY:=h(\cY)$.  By \eqref{hombbC}, the change of variables $\hat y=h(y)$, Lemma \ref{lemisow}, the definition of a symmetry of a periodic elastic structure \eqref{microsym}, and \eqref{hombbCbar}, we have
\begin{align*}
\bbC^0(\bH^{\top}\bE\bH) &= \dashint_\cY \bbC(y)\big[\bH^\top\bE\bH+\nabla_y(\bw^{\bH^\top\bE\bH})(y)\big]dy\\
&= \dashint_{\hat\cY} \bbC(h^{-1}(\hat y))\big[\bH^\top\bE\bH+\nabla_y(\bw^{\bH^\top\bE\bH})(h^{-1}(\hat y))\big]d\hat y\\
&= \dashint_{\hat\cY} \bbC(h^{-1}(\hat y))\big[\bH^\top\bE\bH+\bH^\top\nabla_{\hat y}\hat\bw^{\bE}(\hat y)\bH\big]d\hat y\\
&= \bH^{-1}\dashint_{\hat\cY} \bbC(\hat y)\big[\bE+\nabla_{\hat y}\hat\bw^{\bE}(\hat y)\big]d\hat y\, \bH^{-\top}\\
&=\bH^{-1}(\bbC^0\bE)\bH^{-\top}.
\end{align*}
\end{proof}

\begin{figure}
\includegraphics[height=2in]{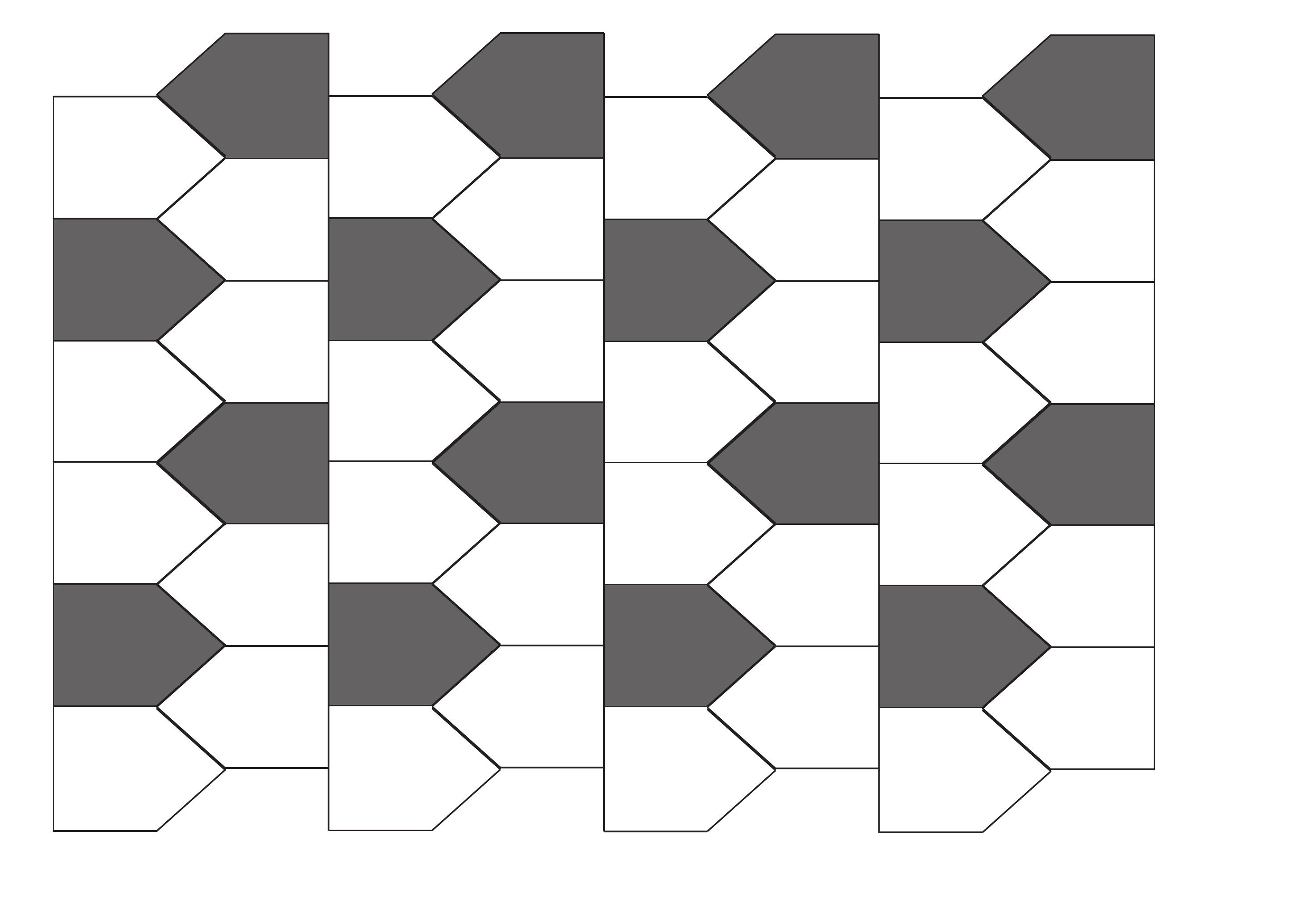}
\includegraphics[height=2in]{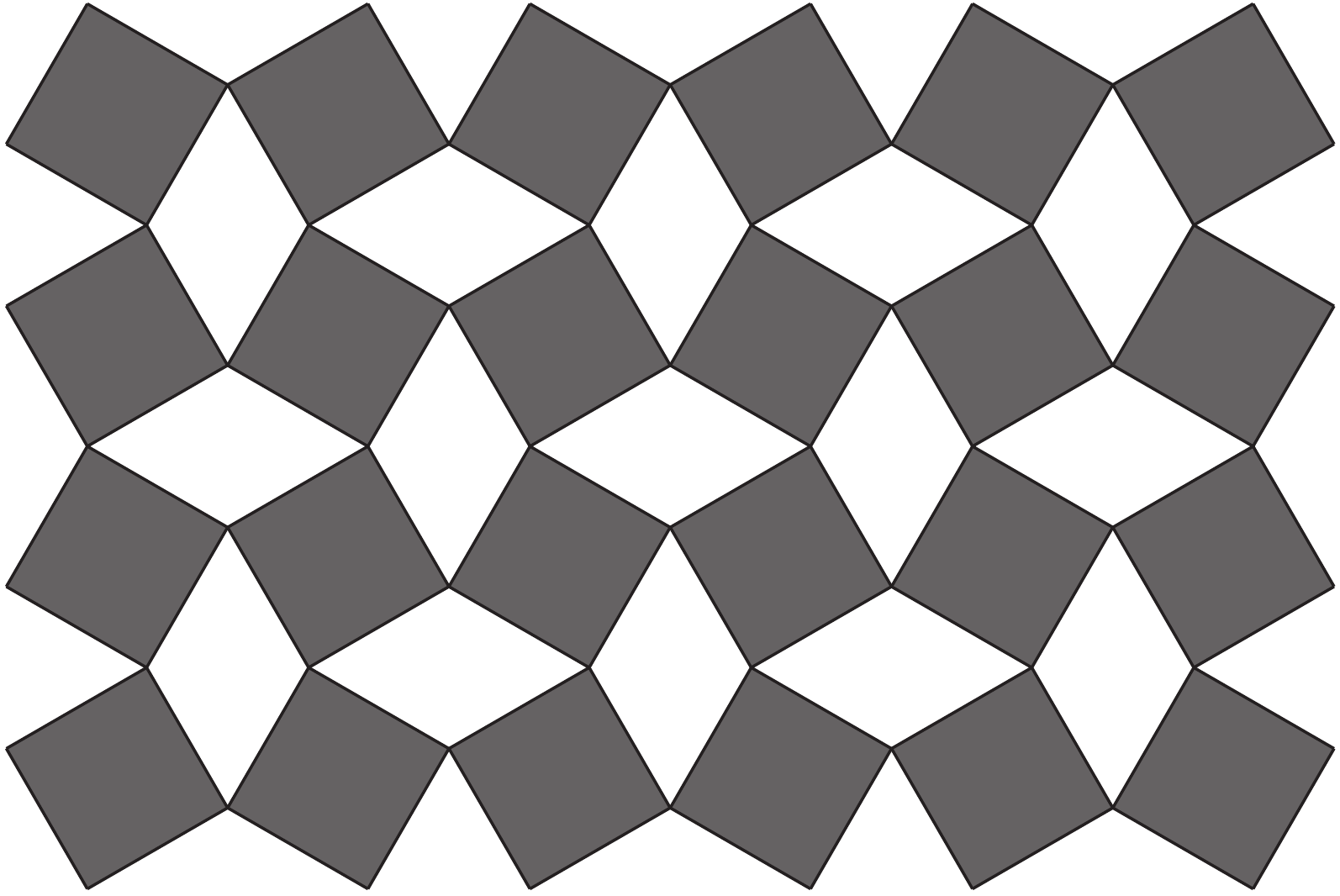}
\put(-324,-15){(a)}
\put(-114,-15){(b)}
\put(-146,34){$\bullet$}
\put(-140,30){$p$}
\put(-92,52){$\bullet$}
\put(-88,48){$q$}
\put(-313.5,50){$\ba$}
\thicklines
\put(-328.8,46.0){\rotatebox[origin=c]{90}{$\vector(1,0){33}$}}
\caption{Examples of periodic elastic structures containing symmetries that are not covered by previous results similar to Theorem~\ref{thmmain}.  In these examples, the white and grey regions should be viewed as different isotropic materials. (a) An example of a periodic elastic structure having a symmetry consisting of the translation $\ba$ together with a reflection in which neither the translation nor the reflection by themselves symmetries of the elastic structure. (b) This periodic elastic structure has amongst its symmetries reflections about the vertical and horizontal lines passing through $p$ and rotations about the point $q$ by $\pi/2$ radians.}
\label{figtranrot}
\end{figure}

Equation \eqref{mainresult} is the condition for $\bH$ to be in the material symmetry group of $\bbC^0$.  Thus, Theorem~\ref{thmmain} says that the gradient of every symmetry of the periodic elastic structure is a material symmetry of the macroscopic elasticity tensor.  For a detailed discussion of material symmetry see, for example, Gurtin, Fried, and Anand \cite{GFA} or Epstein and El$\dot{\text{z}}$anowski \cite{EE}.  

Notice that Theorem \ref{thmmain} does not say that every element of the material symmetry group of $\bbC^0$ is generated by a symmetry of the periodic elastic structure.  In fact, there are material symmetries that are not generated by symmetries of the periodic elastic structure.  This will be demonstrated with an example in the next section.

Although the above analysis was carried out in the context of linear elasticity, similar calculations hold for any first-order transport law.  In that case, a mobility tensor $\bM$ would be defined on a unit cell and \eqref{microsym} would be replaced by
$$
\bM(z)=\bH\bM(h^{-1}(z))\bH^\top\rfa z\in\cE.
$$
For example, the physical processes of diffusion, heat conduction, and dielectric induction have this kind of structure.

Results similar to Theorem~\ref{thmmain} were established by Jikov, Kozlov, and Oleinik \cite{JKO} and by Alexanderian, Rathinam, and Rostamian \cite{ARR}.  To accurately compare these results with Theorem~\ref{thmmain}, take the Euclidean space $\cE$ to be $\Real^n$, in which case points can be considered as vectors and vice versa.  Jikov, Kozlov, and Oleinik proved Theorem~\ref{thmmain} for symmetries of the form
\beqn\label{orthsym}
z\mapsto \bQ z\rfa z\in\Real^3,
\eeqn
where $\bQ$ is an orthogonal linear mapping and Alexanderian, Rathinam, and Rostamian also considered symmetries of this form in the context of the homogenization of diffusive random media.  Thus, these previous results only considered symmetries that are rotations or reflections about a single point---in particular, the origin.  Whereas Theorem~\ref{thmmain} also includes symmetries about different points, symmetries consisting of an orthogonal linear mappping together with a translation, and symmetries involving unimodular linear mappings.  

A periodic elastic structure with a symmetry consisting of a translation and reflection is depicted in Figure~\ref{figtranrot}(a).  In this figure consider the white and grey regions to consist of different isotropic elastic materials.  Notice that a reflection about any vertical line is not a symmetry of the elastic structure, however the transformation consisting of the translation $\ba$ together with the reflection about the vertical line aligned with $\ba$ is a symmetry of the periodic elastic structure.  Figure~\ref{figtranrot}(b) is an example of a periodic elastic structure that contains symmetries of the form 
$$
z\mapsto z_\circ+\bH (z-z_\circ)\rfa z\in\cE
$$
for different $\bH$ associated with different points $z_\circ$.  Namely, there are reflection symmetries about point $p$ and a rotation about the point $q$ by $\pi/2$ radians is a symmetry.  An example of a periodic elastic structure that has a symmetry where $\bH$ is not orthogonal is described in the next section.

When $\bbC$ takes only a finite number of values, the elastic structure is made up of a finite number of materially uniform constituents.  In this case, two factors determine the symmetries of such an elastic structure and, hence, the material symmetry of $\bbC^0$:  the material symmetries of the constituents and their arrangement within the periodic structure.  This is illustrated in the next section.  Parnell and Abrahams \cite{PA} realized this in the context of a three dimensional elastic structure made of two constituents.

\section{Examples}\label{sectEx}

In this section several examples of periodic elastic structures are given.  For each example, the symmetries of the periodic elastic structure are mentioned together with the resulting material symmetries of the macroscopic elasticity tensor guaranteed by Theorem~\ref{thmmain}.  

For simplicity, consider $\cE$ to be the three-dimensional space $\Real^3$ so that points and vectors are interchangeable and use the standard basis $\be_1,\be_2, \be_3$.  Given a unit vector $\ba$ and an angle $\phi$, let $\bR^\phi_\ba$ be the right-handed rotation through the angle $\phi$ about the axis in the direction of $\ba$.  Notice that $-\bR^\pi_\ba$ is the reflection with respect to the plane through the origin with normal $\ba$.

Since the macroscopic elasticity tensor $\bbC^0$ is given by an integral expression, if $\bbC$ is changed on a set of measure zero, then the macroscopic elasticity tensor remains unchanged.  For this reason, in the following examples the microscopic elasticity tensor $\bbC$ is only specified up to a set of measure zero.  Similarly, the unit cell $Y$ which defines the periodic elastic structure only tiles $\Real^3$ up to a set of measure zero.\\

\begin{figure}
\begin{center}
\includegraphics[width=1.92in]{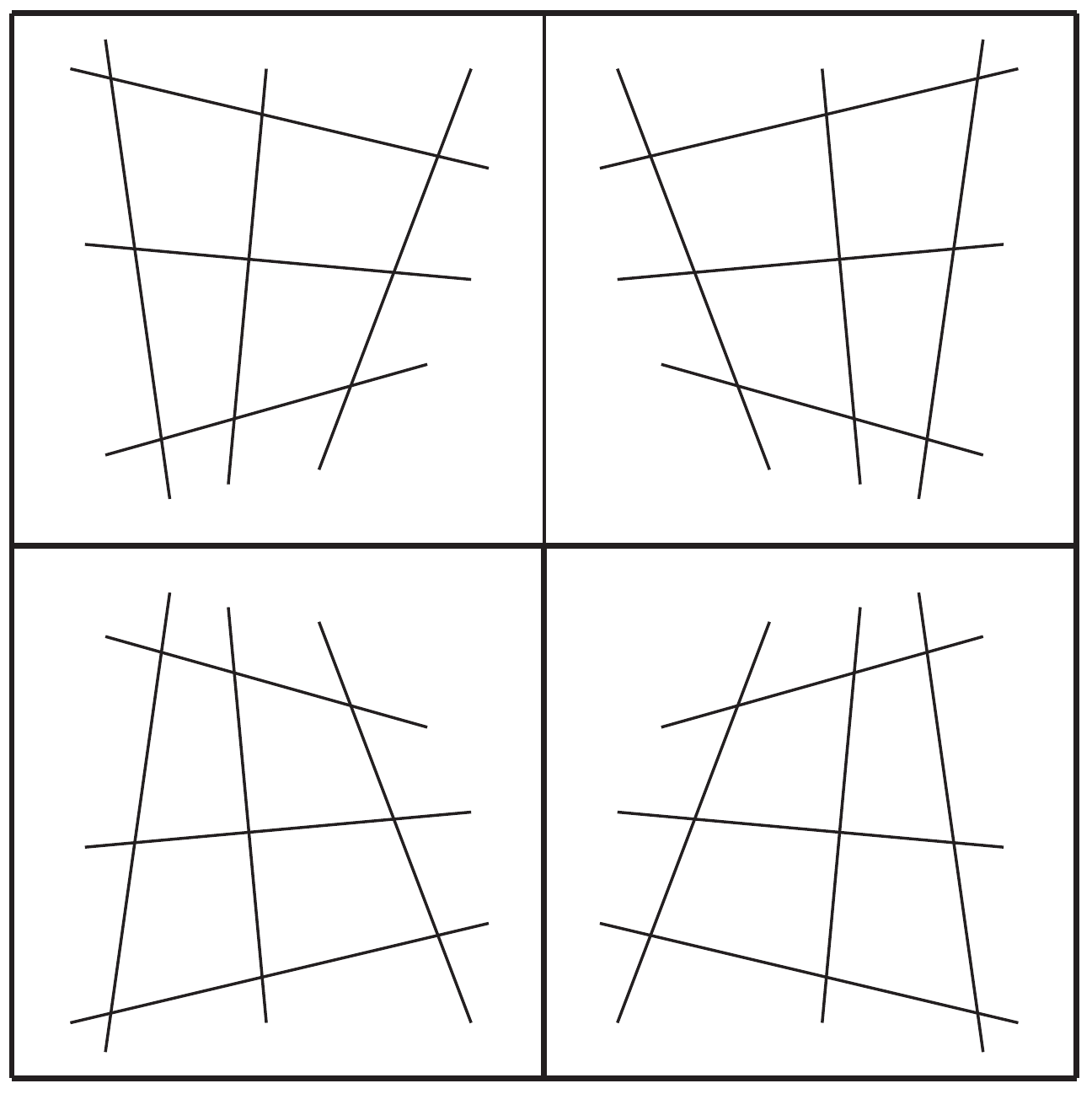}
\end{center}
\caption{A depiction of a cross-section of the unit cell described in the example involving orthotropic symmetry.  The cross-section is taken parallel to one of the coordinate axes.  The anisotropy of the elasticity tensor, depicted as crisscrossed lines, in the different quadrants are related through reflections.}
\label{aniso}
\end{figure}

\noindent \textbf{Orthotropic symmetry:}
It is possible for the elasticity tensor $\bbC$ associated with the periodic elastic structure to be fully anisotropic at each point and yet the macroscopic elasticity tensor can have orthotropic symmetry, meaning that
$$
\bbC^0(\bE)=\bH[\bbC^0(\bH^\top\bE\bH)\big]\bH^\top\rfa \bE\in\Sym\ \text{and}\ \bH = -\bR^\pi_{\be_1}, -\bR^\pi_{\be_2},-\bR^\pi_{\be_3}.
$$
To obtain this, the anisotropies of the elastic structure are arranged in a particular way.  Begin by considering an anisotropic elasticity tensor $\bbC^\text{an}$ and the unit cell $\cY=(-1,1)^3$.  To define $\bbC$ on $\cY$, start by setting $\bbC$ equal to $\bbC^\text{an}$ on the positive octant $(0,1)^3$.  Then, recalling \eqref{notation}, set $\bbC$ equal to $\bbC^\text{an}_{-\bR_{\be_1}^\pi}$ on $(-1,0)\times(0,1)\times(0,1)$.  Next, define $\bbC(y_1,y_2,y_3)$ with $y_1\in(-1,0)\cup(0,1)$, $y_2\in(-1,0)$, and $y_3\in(0,1)$ by $\bbC(y_1,y_2,y_3)=\bbC_{-\bR_{\be_2}^\pi}(y_1,-y_2,y_3)$.  Finally, define $\bbC(y_1,y_2,y_3)$ for negative $y_3$-coordinate by requiring that $\bbC(y_1,y_2,y_3)=\bbC_{-\bR_{\be_3}^\pi}(y_1,y_2,-y_3)$.  Since reflections about perpendicular planes commute, it is readily shown that the mappings
$$
z\mapsto-\bR^\pi_{\be_i} z \qquad \text{for all}\ z\in\Real^3,\, i=1,2,3
$$
are symmetries of the periodic elastic structure $(\cY,\bbC)$.  See Figure~\ref{aniso} for a depiction of the microstructure associated with $\bbC$.  It follows from Theorem~\ref{thmmain} that $-\bR^\pi_{\be_i}$, for $i=1,2,3$, are material symmetries of $\bbC^0$.\\

\begin{figure}
\includegraphics[width=1.92in]{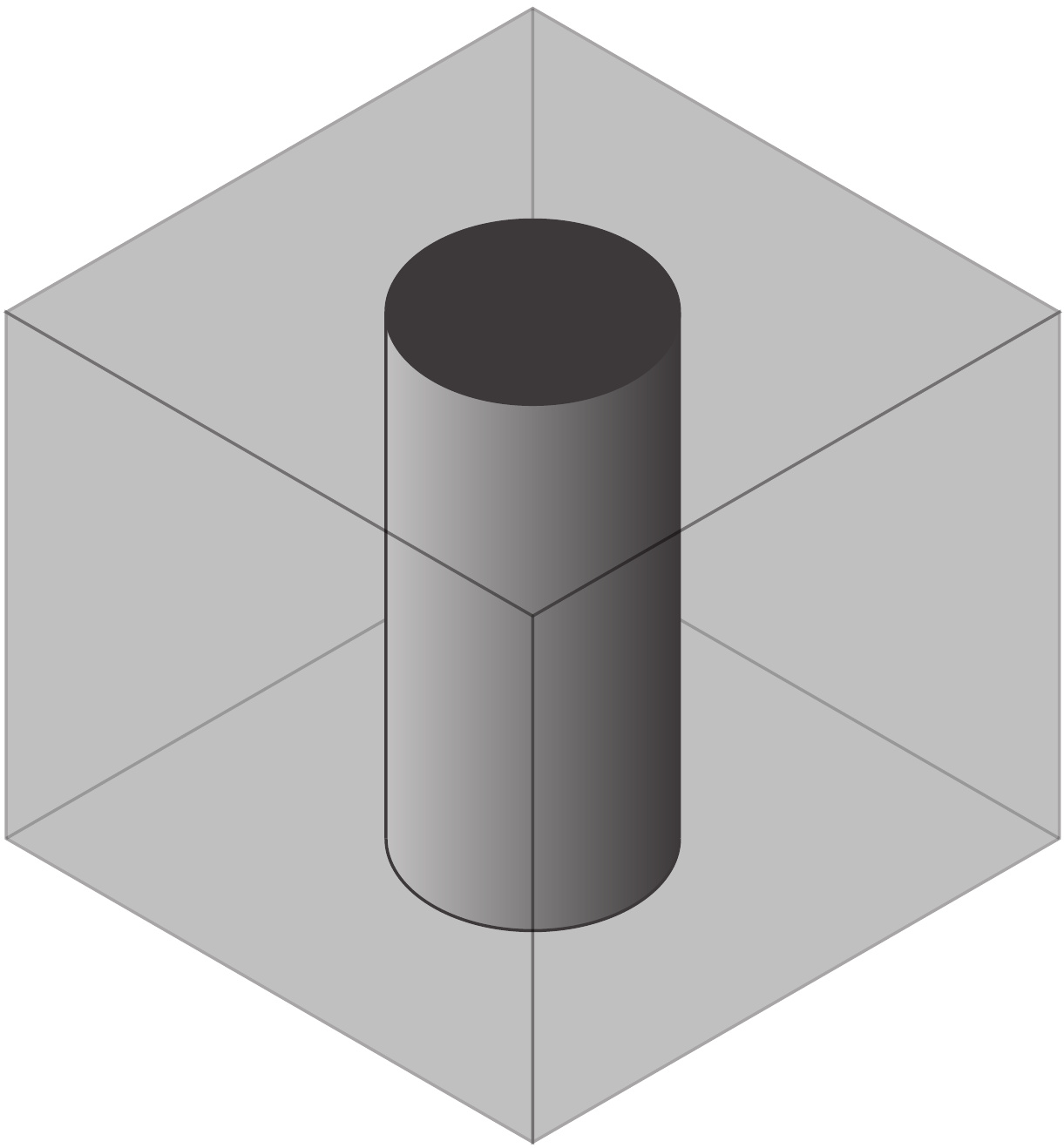}
\includegraphics[width=1.92in]{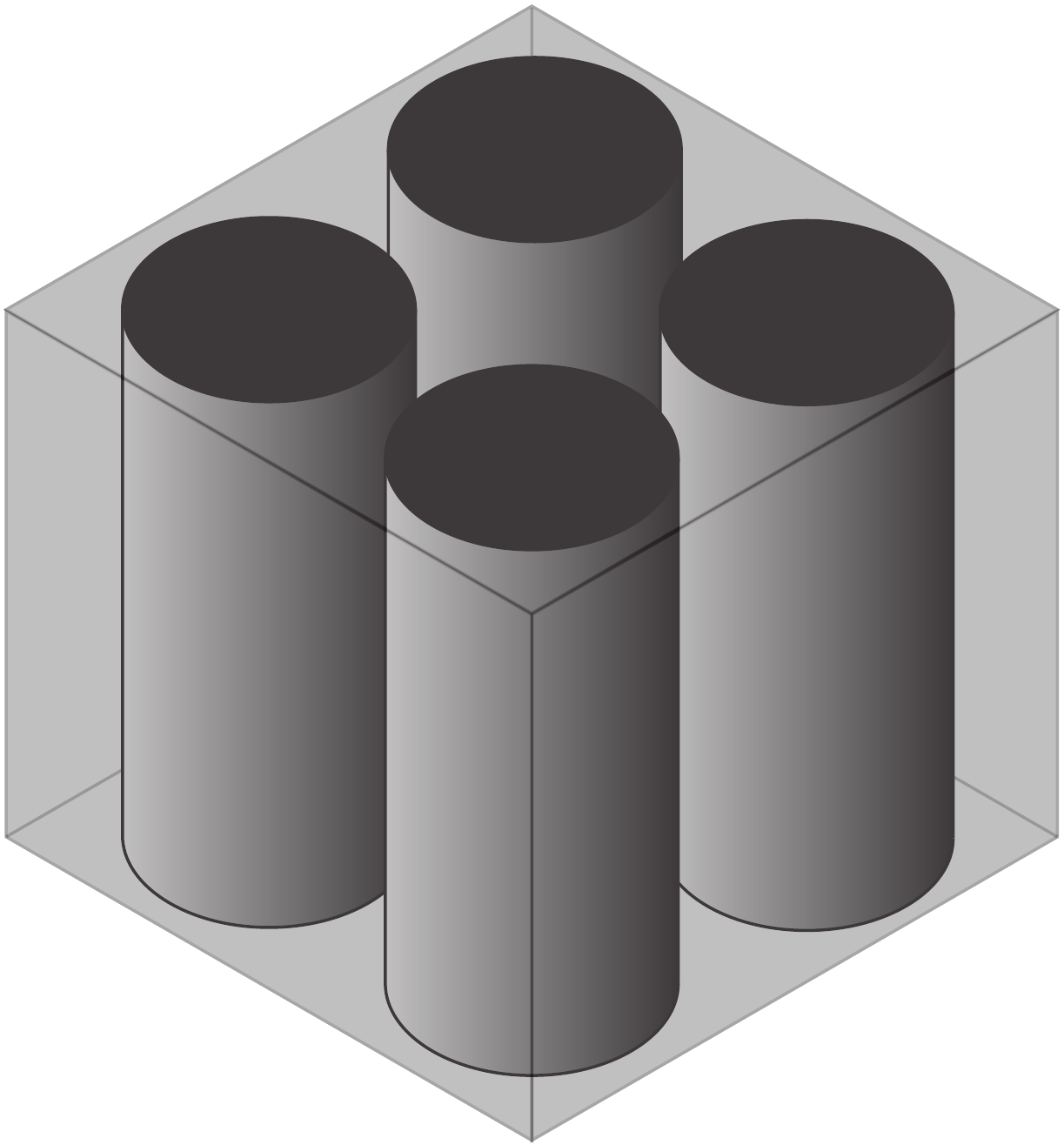}
\includegraphics[width=1.92in]{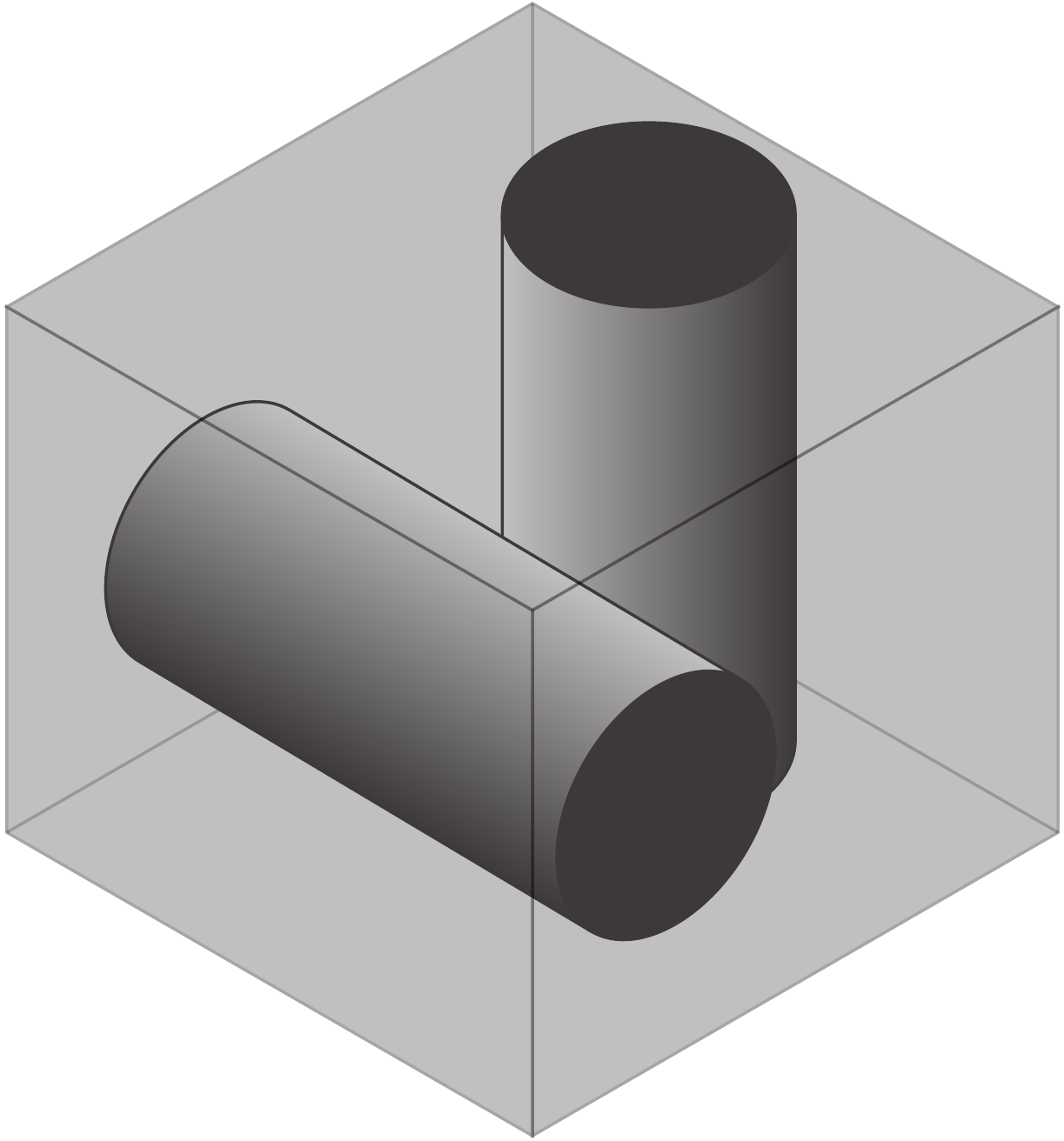}
\put(-361,-15){(a)}
\put(-218,-15){(b)}
\put(-75,-15){(c)}
\caption{The unit cell $Y=(-1,1)^3$ consisting of a matrix embedded with fibers in different configurations.  If the matrix and the fibers have tetragonal symmetry, then the associated macroscopic elasticity tensor has tetragonal symmetry. (a) The unit cell contains one fiber whose axis aligns with the $y_3$-axis.  (b) The unit cell contains four fibers oriented in the direction of the $y_3$-axis.  (c) The unit cell contains two fibers that are oriented in orthogonal directions.}
\label{figtetsym}
\end{figure}

\noindent \textbf{Tetragonal symmetry:}
Suppose the unit cell $\cY=(-1,1)^3$ consists of two different isotropic materials, one representing fibers and the other a matrix in which the fibers are embedded.  Suppose that there is one fiber that occupies a cylinder whose axis is the $y_3$-axis; see Figure~\ref{figtetsym}(a).  Notice that the mappings
$$
z\mapsto\bH z\qquad \text{for all}\ z\in\Real^3,\, \bH=\bR_{\be_1}^\pi,\ \bR_{\be_3}^{\pi/2}
$$
are symmetries of the periodic elastic structure.  To have $\bR_{\be_1}^\pi$ and $\bR_{\be_3}^{\pi/2}$ as material symmetries is sufficient for tetragonal symmetry.

The same symmetry persists if there are several fibers appearing in the unit cell arranged as depicted in Figure~\ref{figtetsym}(b).  In general the microstructures generated by the unit cells depicted in Figures~\ref{figtetsym}(a) and (b) are not the same.  However, if the radii of the fibers and the spacing of the fibers relative to the width of the unit cell are chosen appropriately, then they do generate the same microstructure.  As long as the material symmetry groups of the matrix and fibers contain $\bR_{\be_1}^\pi$ and $\bR_{\be_3}^{\pi/2}$, the result still holds.  Thus, if the fibers are transversely isotropic and the matrix is isotropic, then the macroscopic elasticity tensor has tetragonal symmetry.  This type of periodic elastic structure was used by Ptashnyk and Seguin \cite{PS} in modeling plant cell walls.  The macroscopic elasticity tensor also has tetragonal symmetry when the fibers are oriented in orthogonal directions, as depicted in Figure~\ref{figtetsym}(c).  To see this, let
\begin{align*}
& \{(y_1,y_2,y_3)\in \cY\ |\ (y_1-0.5)^2+y_3^2\leq 0.2^2\},\\
& \{(y_1,y_2,y_3)\in \cY\ |\ (y_1+0.5)^2+y_2^2\leq 0.2^2\}
\end{align*}
be the domains occupied by the two fibers in the unit cell.  It can be verified that the transformations
$$
z\mapsto \bH z\qquad \text{for all}\ z\in\Real^3,\, \bH=-\bR_{\be_2}^\pi,\ -\bR_{\be_3}^{\pi}
$$
and
$$
z\mapsto \be_1 + \bR_{\be_1}^{\pi/2}z\rfa z\in\Real^3
$$
are symmetries of the elastic structure and, thus, the macroscopic elasticity has tetragonal symmetry.\\

\noindent \textbf{Hexagonal symmetry:}
Let $R$ be a rhombus in $\Real^2$ whose short diagonal and sides have length $2$.  Consider the unit cell $\cY=(-1,1)\times R$.  As in the example involving tetragonal symmetry, assume that $\cY$ consists of a matrix and fibers and that the fibers occupy the domain
\beqn\label{cF}
\{(y_1,y_2,y_3)\in \cY\ |\ \text{dist}((y_2,y_3),R_i)<1,\ \text{where $R_i$ are the corners of $R$}\}
\eeqn
and the rest of $\cY$ is occupied by the matrix.  A cross-section of $Y$ at a constant $y_1$ value is shown in Figure~\ref{fighexsym}(a).   The elastic structure resulting from tiling $\Real^3$ with this unit cell has fibers arranged in a hexagonal bundle structure; see Figure~\ref{fighexsym}(b).  When the matrix and the fibers are isotropic, then the periodic elastic structure generated by this unit cell has the symmetry
$$
z\mapsto z_\circ + \bR_{\be_3}^{\pi/3}(z-z_\circ)\rfa z\in\Real^3,
$$
where $z_\circ$ is any point on the axis of one of the fibers. It follows that the corresponding macroscopic elasticity tensor has hexagonal symmetry.  

It is known that the form of an elasticity tensor possessing the material symmetry $\bR_{\be_3}^{\pi/3}$ is the same as that with transverse isotropy; see Gurtin \cite{Gurtin}.  Thus, the material symmetry group of the macroscopic elasticity tensor also contains $\bR_{\be_3}^\theta$ for all $\theta$.  However, there is no mapping $h$ of the form \eqref{isometry} whose gradient is $\bR_{\be_3}^\theta$ when $\theta$ is not a multiple of $\pi/3$.  Thus, this example shows that not every material symmetry of the macroscopic elasticity tensor is generated by a symmetry of the periodic elastic structure. \\

\begin{figure}
\hspace{.1in}\includegraphics[height=1.8in]{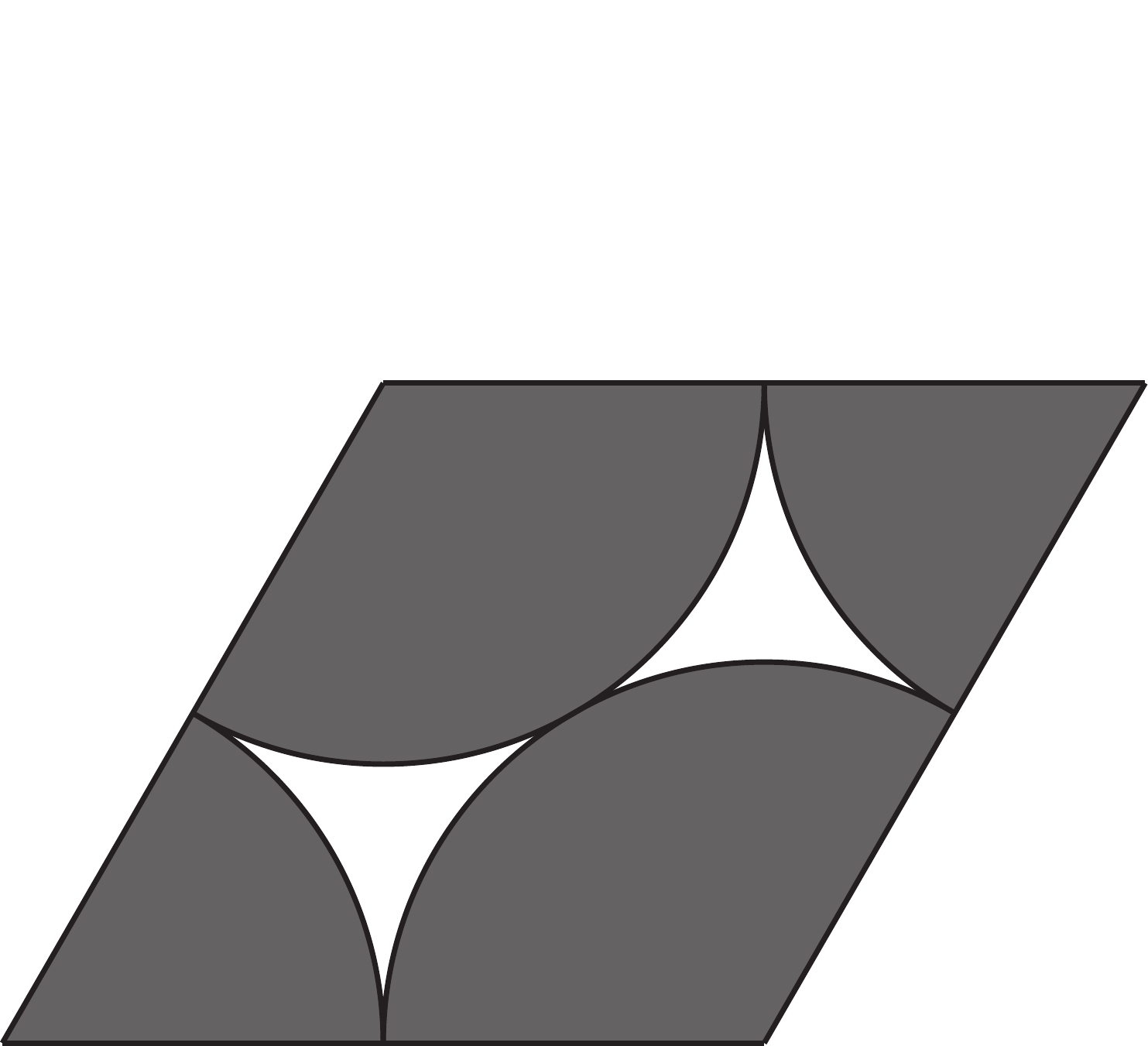}
\hspace{.1in}\includegraphics[height=1.3in]{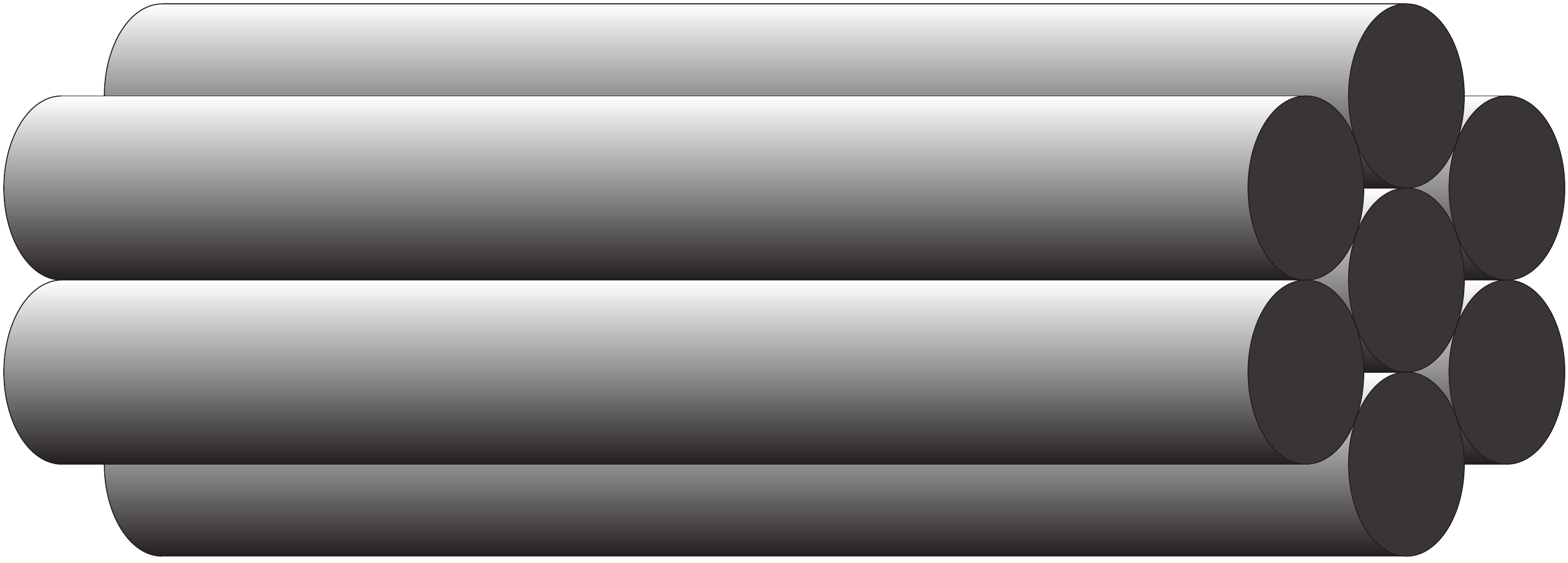}
\put(-359,-15){(a)}
\put(-150,-15){(b)}
\vspace{.05in}
\caption{(a) A cross-section with constant $y_1$ value of the unit cell $Y=(-1,1)\times R$, where $R$ is a rhombus in $\Real^2$ whose short diagonal and sides have length $2$.  The grey regions, defined in \eqref{cF}, are occupied by fibers and the white region is occupied by a matrix.  (b) A depiction of the periodic fiber structure obtained by tiling $\Real^3$ with the unit cell $Y=(-1,1)\times R$.  The fibers, which are shown, are arranged in a hexagonal bundle structure.  The surrounding matrix is not shown.  Here $\be_1$ is in the horizontal direction.}
\label{fighexsym}
\end{figure}

\noindent \textbf{In-plane fluidity:}
Suppose that the unit cell $Y=(-1,1)^3$ consists of two linearly elastic materials, one occupying the bottom half $(-1,1)^2\times(-1,0)$ of $Y$ and the other the top half $(-1,1)^2\times(0,1)$ of $Y$.  Moreover, assume that both materials have in-plane fluidity in the plane orthogonal to $\be_3$, meaning that their symmetry group is
$$
\cG = \{\bH\ \text{unimodular}\ | \ \bH\be_3=\be_3\ \text{and if}\ \be_3\cdot\ba=0, \text{then}\ \be_3\cdot\bH\ba=0\}.
$$
This symmetry group is useful in the modeling of lipid bilayers; see, for example, Deseri, Piccioni, and Zurlo \cite{DPZ} and Maleki, Seguin, and Fried \cite{MSF}.  It is readily verified that for all $\bH\in\cG$, the transformation
$$
z\mapsto \bH z\rfa z\in\Real^3
$$
is a symmetry of the periodic elastic structure and, hence, Theorem~\ref{thmmain} implies that the macroscopic elasticity tensor also has in-plane fluidity.

\section{Conclusion}\label{sectCon}

In this paper we showed a connection between the symmetries of a periodic linearly elastic structure and the material symmetry of the macroscopic elasticity tensor obtained by using homogenization theory.  Our result generalizes previous work by considering a larger class of symmetries on the microscopic scale.  Since the material symmetry group is a subgroup of the unimodular group, we conjecture that symmetries of the form \eqref{isometry} include all possible symmetries of a periodic elastic structure that lead to material symmetries of the macroscopic elasticity tensor.  However, as the example in Section~\ref{sectEx} involving hexagonal symmetry shows, not all material symmetries of the macroscopic elastic tensor are generated by symmetries of the periodic elastic structure.

The proof of Theorem~\ref{thmmain} relies on the formula \eqref{hombbC} for the macroscopic elasticity tensor.  Since homogenization of problems posed in domains with locally-periodic or random microstructures also results in explicit formulas for the macroscopic elasticity tensor, it is possible that with suitable modifications the arguments used here could yield a result similar to Theorem~\ref{thmmain} for these kinds of microstructures.  It would be interesting to investigate whether there is an analog of Theorem~\ref{thmmain} in the case of nonlinear elasticity.

\section{Appendix}

In this Appendix we present the details of the derivation of the macroscopic elasticity tensor $\bbC^0$ given in \eqref{hombbC}.

For the derivation, it is assumed that
\begin{enumerate}
\item there is a strictly positive $M$ such that $|\bbC(y)|\leq M$ for almost every $y\in Y$,
\item the function $y\mapsto \bbC(y)$ is measurable,
\item there is a strictly positive $\alpha$ such that $\alpha|\bE|^2\leq\bE\cdot\bbC\bE$ for all $\bE\in \Sym$,
\item $\bA\cdot\bbC\bB = \bA\cdot\bbC\big[\frac{1}{2}(\bB + \bB^\top)\big]=\big[\frac{1}{2}(\bA + \bA^\top)\big]\cdot\bbC\bB$ for all linear mappings $\bA$ and $\bB$, and
\item $\bD\cdot\bbC\bE=\bE\cdot\bbC\bD$ for all $\bD,\bE\in\Lin$.
\end{enumerate}

Let $\Omega$ be a reference configuration of the elastic material with elasticity tensor $\bbC^\ve$ defined in \eqref{bbCscale}, and assume that $\Omega$ is open and bounded with Lipschitz boundary so that locally the boundary can be represented as the graph of a Lipschitz function.  Let $\Gamma_1$ and $\Gamma_2$ be disjoint subsets of $\partial\Omega$ with $\partial\Omega=\Gamma_1\cup\Gamma_2$, and assume a zero displacement boundary condition on $\Gamma_1$ and the traction on $\Gamma_2$ is given by $\bt$.  The resulting mixed boundary-value problem in elastostatics with body force $\bb$ is
\begin{equation}\label{mixedpro}
\left \{
\begin{array}{ll}
\text{div}(\bbC^\ve\nabla\bu^\ve)+\bb=\textbf{0}\qquad &\text{in}\ \Omega, \\[10pt]
\bu^\ve = \textbf{0}\qquad &\text{on}\ \Gamma_1, \\[10pt]
(\bbC^\ve\nabla\bu^\ve)\bn = \bt\qquad &\text{on}\ \Gamma_2,
\end{array}
\right.
\end{equation}
where $\bn$ is the exterior unit-normal to $\Omega$.  Notice that by the assumed properties of $\bbC$, only the symmetric part of $\nabla\bu^\ve$ is relevant in \eqref{mixedpro}.  By the Lax--Milgram theorem \cite{OSY} a unique solution of \eqref{mixedpro} exists in the space
\beqn\label{defH}
\cH = \{\bu\in H^1(\Omega,\cV)\ |\ \bu = \textbf{0}\ \text{on}\ \Gamma_1\}
\eeqn
provided that $\bb\in L^2(\Omega,\cV)$ and $\bt\in L^2(\Gamma_2,\cV)$.  Moreover, the solutions $\bu^\ve$ of \eqref{mixedpro} are bounded in the $H^1$-norm independent of $\ve$ and, hence, there is a $\bu\in H^1(\Omega,\cV)$ such that up to a subsequence
\beqn
\label{strconv}\text{$\bu^\ve$ converges weakly to $\bu$ in $H^1(\Omega,\cV)$};
\eeqn
see \cite{OSY}.  The goal of homogenization is to find an equation that characterizes $\bu$.  

To derive the macroscopic equations associated with \eqref{mixedpro}, the notion of two-scale convergence first introduced by Nguetseng \cite{Ngu} and further developed by Allaire \cite{All} is applied using the following function spaces.

\begin{itemize}
\item $L^2_\text{per}(Y,\cV)$ is the set of all functions $\bpsi$ from $\cE$ to $\cV$ that are $Y$-periodic and $\int_Y|\bpsi(y)|^2\,dy$ is finite.
\item $H^1_\text{per}(Y,\cV)$ is the completion with respect to the $H^1$-norm of the space of smooth functions from $\cE$ to $\cV$ that are $Y$-periodic.
\item $\cW_\text{per}(Y,\cV)$ consists of equivalence classes of $H^1_\text{per}(Y,\cV)$ where two functions are equivalent if they differ by a constant vector.
\item $L^2(\Omega, \cW_\text{per}(Y,\cV))$ consists of all functions $\bpsi$ from $\Omega$ to $\cW_\text{per}(Y,\cV)$ such that $$\int_\Omega \|\bpsi(x,\cdot)\|_{H^1(Y,\cV)}^2dx$$ is finite.
\item $C(\overline\Omega,L^2_\text{per}(Y,\cV))$ consists of functions $\bpsi:\overline\Omega\times\cE\rightarrow\cV$ such that for all $x\in\overline\Omega$ the function $y\mapsto\bpsi(x,y)$ is in $L^2_\text{per}(Y,\cV)$ and the function $x\mapsto \bpsi(x,\cdot)$ from $\overline\Omega$ to $L^2_\text{per}(Y,\cV)$ is continuous.

\end{itemize}

\noindent Two-scale convergence, in the context presented here, is defined as follows.

\begin{definition}
A sequence of functions $\bv^\ve\in L^2(\Omega,\cV)$ two-scale converges to $\bv^0\in L^2(\Omega\times\cY,\cV)$ if for all $\bpsi\in C(\overline\Omega,L^2_\text{\rm per}(Y,\cV))$,
\beqn\label{twoscaledef}
\lim_{\ve\rightarrow 0}\int_\Omega \bv^\ve(x)\cdot\bpsi\big(x,\frac{x-q}{\ve}+q\big)dx = \int_\Omega\dashint_Y \bv^0(x,y)\cdot\bpsi(x,y)dydx,
\eeqn
where the symbol $\dashint$ denotes the average integral.
\end{definition}

\noindent By using the change of variables $x\mapsto x-q$, one can see that this definition is equivalent to the standard definition \cite{All,Ngu} involving a vector space rather than a Euclidean point space.  Associated with two-scale convergence is the following compactness result \cite{All,Ngu}.

\begin{theorem}\label{thmcomp}
If $\,\bv^\ve\in H^1(\Omega,\cV)$ is a sequence of functions that weakly converges to $\bv^0\in H^1(\Omega,\cV)$, then there is a $\bv^1\in L^2(\Omega,\cW_\text{\rm per}(Y,\cV))$ such that up to a subsequence
$$
\nabla\bv^\ve\ \text{converges two-scale to}\ \nabla\bv^0+\nabla_y\bv^1.
$$
\end{theorem}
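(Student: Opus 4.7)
The plan is to follow the now-standard Nguetseng--Allaire argument, combining basic two-scale compactness with a subsequent identification of the limit via divergence-free test functions. Since $\bv^\ve$ is bounded in $H^1(\Omega,\cV)$, the gradient sequence $\nabla\bv^\ve$ is bounded in $L^2(\Omega,\cV\otimes\cV)$, so by the basic two-scale compactness theorem for bounded $L^2$ sequences \cite{All,Ngu} there exist $\boldsymbol{\xi}\in L^2(\Omega\times Y,\cV\otimes\cV)$ and a subsequence along which $\nabla\bv^\ve$ two-scale converges to $\boldsymbol{\xi}$. The Rellich--Kondrachov compact embedding gives $\bv^\ve\to\bv^0$ strongly in $L^2(\Omega,\cV)$, which implies two-scale convergence of $\bv^\ve$ to the $y$-independent limit $\bv^0(x)$. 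What remains is to exhibit $\boldsymbol{\xi}$ in the form $\nabla\bv^0(x)+\nabla_y\bv^1(x,y)$ for some $\bv^1\in L^2(\Omega,\cW_\text{per}(Y,\cV))$.

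For the identification I would test against products $\bpsi(x,y)=\phi(x)\boldsymbol{\Psi}(y)$, where $\phi\in C_c^\infty(\Omega)$ and $\boldsymbol{\Psi}\in C_\text{per}^\infty(\cE,\cV\otimes\cV)$ is $y$-divergence-free. The definition of two-scale convergence gives
$$\int_\Omega\nabla\bv^\ve(x)\cdot\phi(x)\boldsymbol{\Psi}\big(\tfrac{x-q}{\ve}+q\big)\,dx\longrightarrow\int_\Omega\dashint_Y\boldsymbol{\xi}(x,y)\cdot\phi(x)\boldsymbol{\Psi}(y)\,dy\,dx.$$
Integration by parts in $\Omega$ and the chain rule rewrite the left-hand side as the sum of $-\int_\Omega\bv^\ve\cdot(\boldsymbol{\Psi}^\ve\nabla\phi)\,dx$ and the singular term $-\ve^{-1}\int_\Omega\bv^\ve\cdot\phi\,(\text{div}_y\boldsymbol{\Psi})^\ve\,dx$, the latter vanishing by hypothesis on $\boldsymbol{\Psi}$. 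Using strong $L^2$ convergence of $\bv^\ve\to\bv^0$ followed by one more integration by parts in $x$ (which is legitimate because $\bv^0$ is independent of $y$), the remaining term passes to $\int_\Omega\dashint_Y\nabla\bv^0(x)\cdot\phi(x)\boldsymbol{\Psi}(y)\,dy\,dx$. Consequently $\boldsymbol{\xi}-\nabla\bv^0$ is $L^2(\Omega\times Y)$-orthogonal to every product $\phi\boldsymbol{\Psi}$ of this form.

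Varying $\phi$ and applying the fundamental lemma of the calculus of variations, one concludes that for almost every $x\in\Omega$ the periodic tensor field $y\mapsto\boldsymbol{\xi}(x,y)-\nabla\bv^0(x)$ is $L^2(Y)$-orthogonal to every smooth $Y$-periodic $y$-divergence-free tensor field. Applying the periodic Helmholtz decomposition row by row---the orthogonal complement in $L^2_\text{per}(Y,\cV\otimes\cV)$ of the closed subspace of divergence-free periodic tensor fields is precisely the closure of the $y$-gradients of $\cW_\text{per}(Y,\cV)$-functions---produces, for almost every such $x$, a unique $\bv^1(x,\cdot)\in\cW_\text{per}(Y,\cV)$ with $\nabla_y\bv^1(x,\cdot)=\boldsymbol{\xi}(x,\cdot)-\nabla\bv^0(x)$. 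The main obstacle is showing that this pointwise-in-$x$ construction assembles into a genuine element of $L^2(\Omega,\cW_\text{per}(Y,\cV))$; this is handled by noting that the selection is given by a bounded linear orthogonal projection on $L^2_\text{per}(Y,\cV\otimes\cV)$, so joint measurability in $x$ and the required $L^2$ bound follow automatically from the corresponding properties of $\boldsymbol{\xi}$ and $\nabla\bv^0$.
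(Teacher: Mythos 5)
The paper offers no proof of this theorem: it is quoted as the known two-scale compactness result and attributed directly to Nguetseng and Allaire \cite{All,Ngu}. Your argument is a correct reconstruction of that standard proof---$L^2$-boundedness of $\nabla\bv^\ve$ plus basic two-scale compactness, identification of the limit by testing against $y$-divergence-free periodic tensor fields so that the singular $\ve^{-1}$ term drops out, and the periodic Helmholtz decomposition (with the bounded right inverse of $\nabla_y$ on $\cW_\text{per}(Y,\cV)$ handling measurability in $x$)---so it follows essentially the same route as the cited sources.
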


\noindent It follows from Theorem~\ref{thmcomp} that \eqref{strconv} implies that there is a $\bw\in L^2(\Omega,\cW_\text{per}(Y,\cV))$ such that up to a subsequence
\begin{align}
\label{tsconv}\nabla\bu^\ve\ \text{converges two-scale to}\ \nabla\bu+\nabla_y\bw.
\end{align}

Define $\bv^\ve(x)=\bv^0(x)+\ve \phi(x)\bv^1(\frac{x-q}{\ve}+q)$ for $x\in\Omega$, where $\bv^0\in\cH\cap C^\infty(\Omega,\cV)$, $\phi\in C^\infty_0(\overline\Omega)$, and $\bv^1\in H^1_{\rm per}(Y,\cV)$.  Notice that 
\beqn
\label{vwconv}\text{$\bv^\ve$ converges weakly to $\bv^0$ in $H^1(\Omega,\cV)$}.
\eeqn
Multiply \eqref{mixedpro}$_1$ by $\bv^\ve$, integrate over $\Omega$, and then integrate by parts to obtain
\begin{multline}\label{hom0}
-\int_\Omega \nabla\bu^\ve(x)\cdot\bbC^\ve(x)\big[\nabla\bv^0(x)+\ve\bv^1\big(\frac{x-q}{\ve}+q\big)\otimes\nabla\phi(x)+\phi(x)\nabla_y\bv^1\big(\frac{x-q}{\ve}+q\big)\big]dx\\
 + \int_{\Gamma_2}\bt(x)\cdot\bv^\ve(x)\,dx+\int_\Omega\bb(x)\cdot\bv^\ve(x)\,dx=0.
\end{multline}
Notice that the mapping
$$
(x,y)\mapsto \bbC(y)\big[\nabla\bv^0(x)+\phi(x)\nabla_y\bv^1(y)\big]
$$
is a suitable test function in the definition of two-scale convergence and that 
\beqn\label{neededconv}
\ve\int_\Omega \nabla\bu^\ve(x)\cdot\bbC^\ve(x)\big[\bv^1\big(\frac{x-q}{\ve}+q\big)\otimes\nabla\phi(x)\big]dx\longrightarrow 0\qquad \text{as}\ \ve\rightarrow 0.
\eeqn
Taking the limit as $\ve\rightarrow 0$ in \eqref{hom0} and using \eqref{tsconv}, \eqref{vwconv}, and \eqref{neededconv} together with the symmetry of $\bbC$ yields
\begin{multline}\label{hom1}
-\int_\Omega\dashint_Y \bbC(y)\big[\nabla\bu(x)+\nabla_y\bw(x,y)\big]\cdot\big[\nabla\bv^0(x)+\phi(x)\nabla_y\bv^1(y)\big]dydx\\
 + \int_{\Gamma_2}\bt(x)\cdot\bv^0(x)\,dx+\int_\Omega\bb(x)\cdot\bv^0(x)\,dx=0.
\end{multline}
Taking $\bv^0=\textbf{0}$ in \eqref{hom1} and using the arbitrariness of $\phi\in C^\infty_0(\overline\Omega)$ yields
\begin{multline}\label{hom2}
\int_\cY \bbC(y)\big[\nabla\bu(x)+\nabla_y\bw(x,y)\big]\cdot\nabla_y\bv^1(y)\, dy = 0\\
\text{for all}\ \bv^1\in H^1_\text{per}(Y,\cV)\ \text{and almost every}\ x\in\Omega.
\end{multline}
It can be shown that given $\nabla\bu$, there is exactly one $\bw\in L^2(\Omega,\cW_\text{per}(\cY,\cV))$ that satisfies $\eqref{hom2}$; see, for example, \cite{OSY}.  Also, $\bw$ only depends on $x$ through the symmetric part of $\nabla\bu(x)$ and, moreover, $\bw$ depends linearly on the symmetric part of $\nabla\bu(x)$.  Thus, \eqref{hom2} can be reformulated as follows: for any $\bE\in \Lin$, there is a unique $\bw^\bE\in \cW_\text{per}(Y,\cV)$ such that
\begin{equation}\label{unitcellA}
\displaystyle\int_\cY \bbC(y)\big[\bE+\nabla_y\bw^\bE(y)\big]\cdot\nabla_y\bv(y)\, dy = 0 \rfa \bv\in H^1_\text{per}(\cY,\cV),
\end{equation}
which is called the unit cell problem.  

Next, take $\phi=0$ in \eqref{hom1} to obtain
\begin{multline}\label{hom2.5}
-\int_\Omega\dashint_Y \bbC(y)\big[\nabla\bu(x)+\nabla_y\bw(x,y)\big]\cdot \nabla\bv^0(x)\, dydx\\
 + \int_{\Gamma_2}\bt(x)\cdot\bv^0(x)\,dx+\int_\Omega\bb(x)\cdot\bv^0(x)\,dx=0.
\end{multline}
If we define
\beqn\label{hombbCA}
\bbC^0\bE := \dashint_\cY \bbC(y)\big[\bE+\nabla_y\bw^\bE(y)\big]\, dy\rfa \bE\in \text{Sym},
\eeqn
which when written in components relative to an orthonormal basis has the form
\beqn\label{hombbCcomp}
\bbC^0_{ijkl} = \dashint_\cY(\bbC_{ijkl}(y)+\sum_{p,q=1}^n\bbC_{ijpq}(y)\partial_{y_q}\bw^{\be_k\otimes\be_l}_{p}(y))\, dy,
\eeqn
then \eqref{hom2.5} can be rewritten as
\beqn\label{hom3}
-\int_\Omega \bbC^0\nabla\bu(x)\cdot\nabla\bv^0(x)\,dx + \int_{\Gamma_2}\bt(x)\cdot\bv^0(x)\,dx+\int_\Omega\bb(x)\cdot\bv^0(x)\,dx=0.
\eeqn
Since \eqref{hom3} holds for all $\bv^0\in\cH\cap C^\infty(\Omega,\cV)$, this equation is the weak formulation of
\begin{equation}\label{hommixedpro}
\left \{
\begin{array}{ll}
\text{div}(\bbC^0\nabla\bu)+\bb=\textbf{0}\qquad &\text{in}\ \Omega, \\[10pt]
\bu = \textbf{0}\qquad &\text{on}\ \Gamma_1, \\[10pt]
(\bbC^0\nabla\bu)\bn = \bt\qquad &\text{on}\ \Gamma_2,
\end{array}
\right.
\end{equation}
which are the macroscopic equations associated with the microscopic system \eqref{mixedpro}.  This justified calling $\bbC^0$ the macroscopic elasticity tensor.  A unique solution of \eqref{hommixedpro} exists in $\cH$ by the Lax--Milgram theorem \cite{OSY}.  Numerically, it is more efficient to solve the system \eqref{hommixedpro} than \eqref{mixedpro} since the elasticity tensor in \eqref{hommixedpro} is constant, while in \eqref{mixedpro} the elasticity tensor can have rapid oscillations.  

An analogous homogenization argument can be carried out for elasticity problems involving different boundary conditions or including an inertial term.  However, the resulting formula \eqref{hombbC} for the macroscopic elasticity tensor, which is of primary interest here, would remain unchanged.\\

\noindent \textbf{Acknowledgement:} M.~Ptashnyk and B.~Seguin gratefully acknowledge the support of the EPSRC First Grant EP/K036521/1 ``Multiscale modelling and analysis of mechanical properties of plant cells and tissues''.


%
%
%


\bibliographystyle{acm}
\bibliography{homogensym} 

\begin{thebibliography}{10}

\bibitem{ARR}
{\sc Alexanderian, A., Rathinam, M., and Rostamian, R.}
\newblock Homogenization, symmetry, and periodization in diffusive random
  media.
\newblock {\em Acta Mathematica Scientia 32B\/} (2012), 129--154.

\bibitem{All}
{\sc Allaire, G.}
\newblock Homogenization and two-scale convergence.
\newblock {\em SIAM Journal on Mathematical Analysis 23\/} (1992), 1482--1518.

\bibitem{CK}
{\sc Cherkaev, A.~V., and Kohn, R.}
\newblock {\em Topics in the Mathematical Modelling of Composite Materials}.
\newblock Birkh\"auser, 1997.

\bibitem{CD}
{\sc Cioranescu, D., and Donato, P.}
\newblock {\em An Introduction to Homogenization}.
\newblock Oxford University Press, 1999.

\bibitem{CN}
{\sc Coleman, B.~D., and Noll, W.}
\newblock Material symmetry and thermostatic inequalities in finite elastic
  deformations.
\newblock {\em Archive for Rational Mechanics and Analysis 15\/} (1964),
  87--111.

\bibitem{DPZ}
{\sc Deseri, L., Piccioni, M.~D., and Zurlo, G.}
\newblock Derivation of a new free energy for biological membranes.
\newblock {\em Continuum Mechanics Thermodynamics 20\/} (2008), 255--273.

\bibitem{EE}
{\sc Epstein, M., and El$\dot{\text{z}}$anowski, M.}
\newblock {\em Material Inhomogeneities and their Evolution}.
\newblock Springer, 2007.

\bibitem{Gurtin}
{\sc Gurtin, M.~E.}
\newblock {\em The Linear Theory of Elasticity, Encyclopedia of Physics, vol.
  VIa/2}.
\newblock Springer, Berlin, New York, 1972.

\bibitem{GFA}
{\sc Gurtin, M.~E., Fried, E., and Anand, L.}
\newblock {\em The Mechanics and Thermodynamics of Continua}.
\newblock Cambridge University Press, 2009.

\bibitem{JKO}
{\sc Jikov, V.~V., Kozlov, S.~M., and Oleinik, O.~A.}
\newblock {\em Homogenization of Differential Operators and Integral
  Functionals}.
\newblock Springer--Verlag, 1994.

\bibitem{Jones}
{\sc Jones, R.~M.}
\newblock {\em Mechanics of Composite Materials}.
\newblock Taylor \& Francis, 1999.

\bibitem{MSF}
{\sc Maleki, M., Seguin, B., and Fried, E.}
\newblock Kinematics, material symmetry, and energy densities for lipid
  bilayers with spontaneous curvature.
\newblock {\em Biomechanics and Modeling in Mechanobiology 12\/} (2012),
  997--1017.

\bibitem{Ngu}
{\sc Nguetseng, G.}
\newblock A general convergence result for a functional related to the theory
  of homogenization.
\newblock {\em SIAM Journal on Mathematical Analysis 20\/} (1989), 608--623.

\bibitem{OSY}
{\sc Oleinik, O.~A., Shomaev, A.~S., and Yosifian, G.~A.}
\newblock {\em Mathematical Problems in Elasticity and Homogenization}.
\newblock North-Holland, 1992.

\bibitem{PA}
{\sc Parnell, W.~J., and Abrahams, I.~D.}
\newblock Homogenization for wave propagation in periodic fibre-reinforced
  media with complex microstructure. i---theory.
\newblock {\em Journal of the Mechanics and Physics of Solids 56\/} (2008),
  2521--2540.

\bibitem{PS}
{\sc Ptashnyk, M., and Seguin, B.}
\newblock Homogenization of a system of elastic and reaction-diffusion
  equations modelling plant cell wall biomechanics.
\newblock {\em M2AS, ESAIM: Mathematical Modelling and Numerical Analysis, in
  press\/} (2014).

\end{thebibliography}

\end{document}